\newtheorem{theorem}{Theorem}
\newtheorem{definition}[theorem]{Definition}
\newtheorem{lemma}[theorem]{Lemma}
\newenvironment{proof}[1][Proof]{\textbf{#1.} }{\ \rule{0.5em}{0.5em}}
\begin{document}

\title{On Structure Preserving Transformations of the It\={o} Generator Matrix for Model 
Reduction of Quantum Feedback Networks}
\author{Hendra I. Nurdin\thanks{%
H. I. Nurdin was with the Research School of Engineering, The Australian National University, when this work was completed.
He is now with the School of Electrical Engineering and Telecommunications, The University of New South Wales,
Sydney NSW 2052, Australia. Email: h.nurdin@unsw.edu.au.}~~and~~John E. Gough%
\thanks{%
Institute for Mathematics and Physics, Aberystwyth University, SY23 3BZ,
Wales, United Kingdom. Email: jug@aber.ac.uk.}}
\maketitle

\begin{abstract}

Two standard operations of model reduction for quantum feedback networks,
 internal connection elimination under the instantaneous feedback limit and
adiabatic elimination of fast degrees of freedom, are cast as structure preserving
transformations of It\={o} generator matrices. It is shown that the order in which they are applied is inconsequential. \end{abstract}

\section{Introduction}

\label{sec:intro}

The last two decades have seen the emergence and explosion of global
research activities in quantum information science that promise to deliver
quantum technologies, a class of technologies that rely on and exploit the
laws of quantum mechanics, which can beat the best known capabilities of
current technological systems in sensing, communication and computation.
Most of the envisioned quantum technologies are quantum information
processing systems that process quantum information \cite{NC00,DM03}.
Typical proposals are realized as quantum networks: linear quantum optical
computing \cite{KLM01}, the quantum internet \cite{Kimb08}, and quantum
error correction \cite{KNPM10,KPCM11}. Quantum networks have also been
experimentally realized in proof-of-principle demonstrations of quantum
information processing, see, e.g., \cite{YAF04,LSHNLCA10}. Besides quantum
information processing, quantum networks have also been proposed for new ultra
low power photonic devices that perform classical information processing. In particular,
photonic devices that act as photonic analogues of classical electronic
circuits and logic devices, e.g., \cite{NJD08,Nurd10a,Nurd10b,Mab11}.

Even relatively simple quantum networks may be difficult to simulate due to
the large number of variables that need to be propagated. It is therefore
necessary to look at model reduction. For instance, this has been used to obtain a
tractable network model of a coherent-feedback system implementing a quantum
error correction scheme for quantum memory \cite{KNPM10}. 
In particular, this involved reduced QSDE models for several
components that make up the nodes of the network. In fact, the process led
to a simple and intuitive quantum master equation that describes the
evolution of the composite state of the three qubits of the quantum memory
and the two atom-based optical switches which jointly act as a
coherent-feedback controller. The idea for this coherent-feedback
realization of a three qubit bit(phase)-flip quantum error correction code,
which can correct only for single qubit bit(phase)-flip errors, was
subsequently extended to a coherent-feedback realization of a nine qubit
Bacon-Shor subsystem code that can correct for arbitrary single qubit
errors \protect\cite{KPCM11}, see Figure \ref{fig:QEC3}. Again, here QSDE model reduction played a
crucial role in justifying the intuitive quantum master equation that
describes the operation of the coherent-feedback QEC circuit.

\begin{figure*}[htbp]
\centering
\includegraphics[scale=0.35]{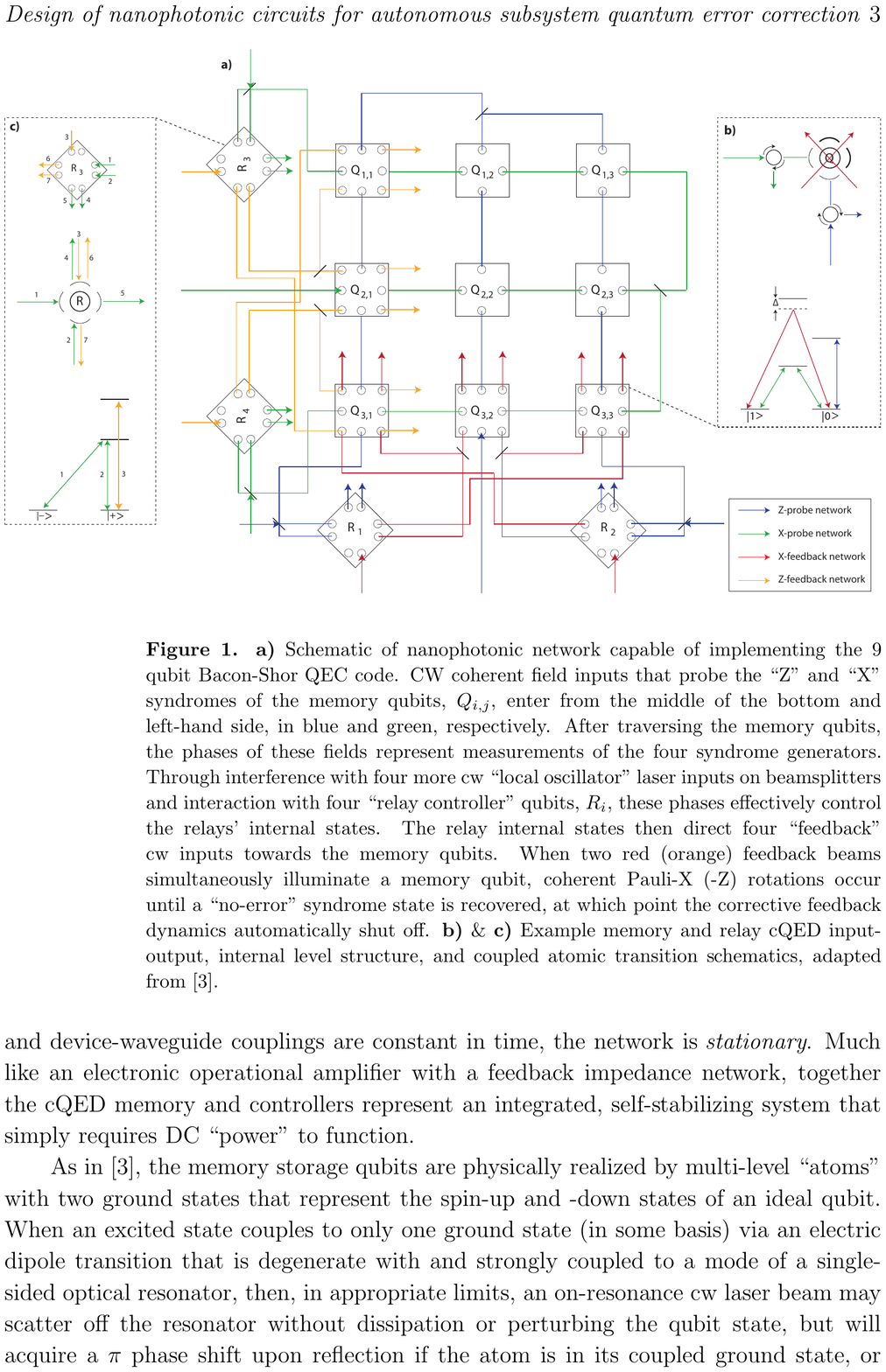}
\includegraphics[scale=0.30]{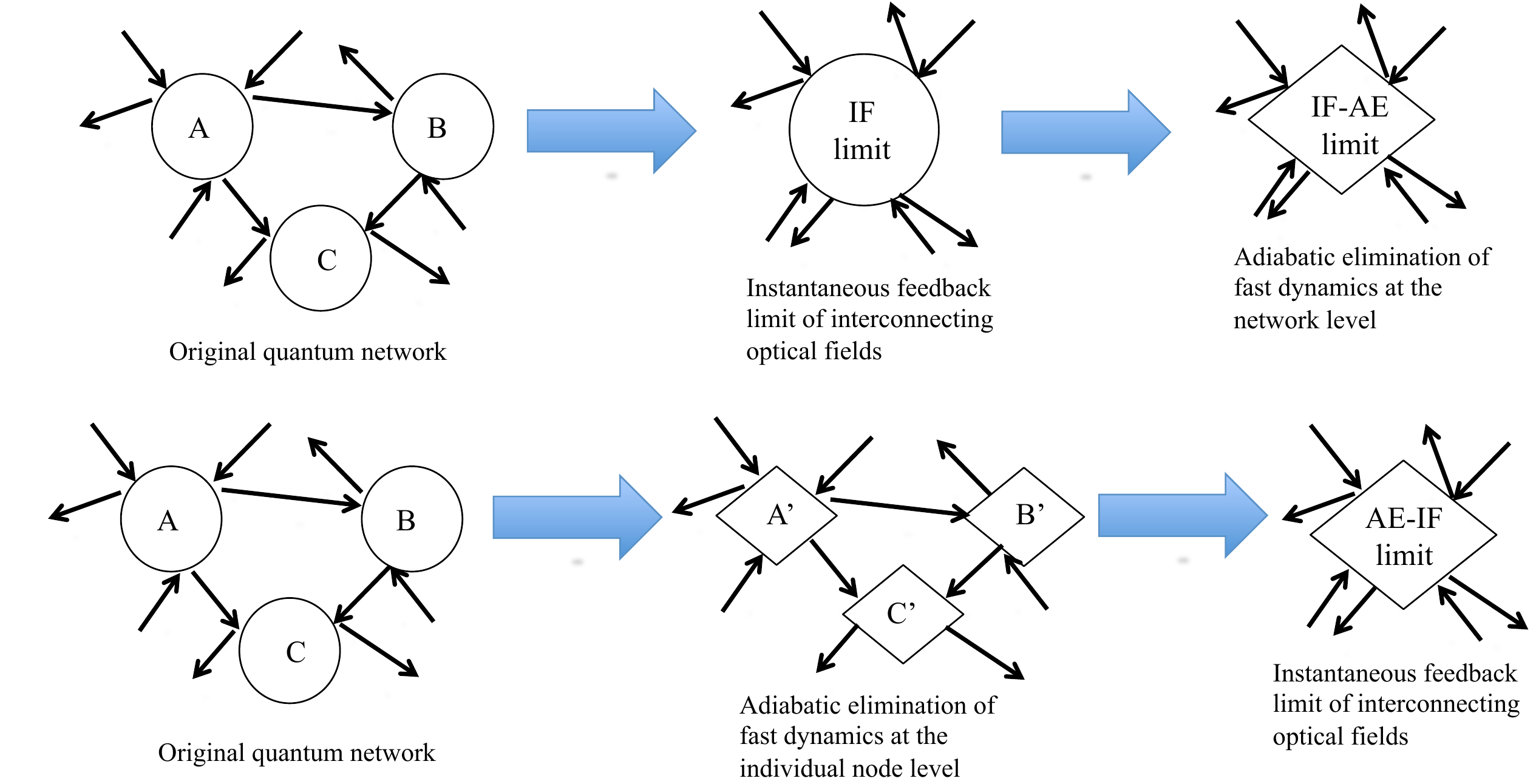}
\caption{\textbf{Left}: A coherent-feedback quantum network that implements a nine qubit
Bacon-Shor subsystem quantum error correction code from \protect\cite
{KPCM11}. The four relays $R_{1},R_{2},R_{3},R_{4}$ act jointly as a
coherent-feedback controller. \textbf{Top right}: complexity reduction of a quantum network by the instantaneous
feedback limit operation (IF) followed by the adiabatic elimination
operation (AE). \textbf{Bottom right}: complexity reduction of a quantum network by
the adiabatic elimination operation followed by the instantaneous feedback limit
operation. A circle denotes a node or quantum network before adiabatic
elimination while a rhombus denotes a node or quantum network after
adiabatic elimination. }
\label{fig:QEC3}
\end{figure*}

This paper considers a class of dynamical quantum networks with open Markov
quantum systems as nodes and in which nodes are interconnected by bosonic
optical fields (such as coherent laser beams). Here the optical fields serve
as quantum links or ``wires'' between nodes in the network. Time delays in
the propagation of the optical fields mean that the network as a whole is no
longer Markov, but fortunately, an \emph{effective Markov model} may be
recovered in the zero time delay limit \cite{CWG93,HJC93,GJ08,GJ07}. The
effective Markov model can then be viewed as a large single node network, as
illustrated in Fig.~\ref{fig:QEC3}. This kind of limit will be referred to
as an \emph{instantaneous feedback limit}.

Another commonly employed approximation is \emph{adiabatic elimination} (or 
\emph{singular perturbation}) of quantum systems that have fast and slow
sub-dynamics with well-separated time scales \cite{DK04,DJ99,BvHS07}.
Besides model simplification, adiabatic elimination has also proved to be a
powerful tool for the approximate engineering of ``exotic'' two or more body
couplings, see, e.g., \cite{WM93,NJD08,KBSM09,KNPM10}.

In \cite{GNW10} it was established, for a special class of quantum
networks containing fast oscillating quantum harmonic oscillators, that the
instantaneous feedback and adiabatic elimination limits are
interchangeable. The main contribution of the present paper is to extend the
results of \cite{GNW10} to general classes of quantum networks with 
Markovian components.

\section{Quantum stochastic differential equations and the It\={o} generator
matrix}

\label{sec:intro-QSDE}We work in the category of the Hudson and
Parthasarathy (bosonic) quantum stochastic models \cite
{HP84,KRP92,Mey95,GJ07}. Here we fix a separable Hilbert space $\mathfrak{h}$, called
the {\em initial} or {\em system} (Hilbert) space, describing the joint state space of the systems at the nodes of the network,
and a finite-dimensional multiplicity space $\mathfrak{K}$ labelling the input fields. The open
quantum system and the quantum boson fields jointly evolve in a unitary
manner according to the solution of a right Hudson-Parthasarathy quantum
stochastic differential equation (QSDE), using the Einstein summation convention, $$U(t)
=I+\int_{0}^{t}U(s) \,G_{\alpha \beta }dA^{\alpha \beta
}(s), $$ 
with $\alpha,\beta=0,1,2,\ldots,n$ ($n$ denotes the dimension of $\mathfrak{K}$) and $\mathbf{G}=\left[ G_{\alpha \beta }\right] $
is a {\em right} It\={o} generator matrix in the set $\mathfrak{G}\left( \mathfrak{h},\mathfrak{K}%
\right) $ of all right It\={o} generator matrices on systems with initial space $\mathfrak{h}$ and multiplicity space $\mathfrak{K}$; see \cite{GJ08,EGJ} for conventions and notation. Here right (left) QSDE means that the generator $G_{\alpha \beta}dA^{\alpha \beta}(s)$ appears to the right (left) of the unitary $U(t)$. Following \cite{BvHS07}, we work with right unitary processes for technical reasons. The solution $U(t)$ of the QSDEs, when they exist, are adapted quantum
stochastic processes. The right It\={o} generator matrix is written as 
\begin{equation*}
\mathbf{G}=\left[ 
\begin{array}{cc}
K & L \\ 
M & N-I
\end{array}
\right]
\end{equation*}
with respect to the standard decomposition of the coefficient space $\mathfrak{C}=%
\mathfrak{h}\otimes \left( \mathbb{C}\oplus \mathfrak{K}\right)$, that is, as $%
\mathfrak{h}\oplus \left( \mathfrak{h}\otimes \mathfrak{K}\right)$. Here $K=G_{00}$,
$L=[G_{0j}]_{j=1,2,\ldots,n}$, $M=[G_{j0}]_{j=1,2,\ldots,n}$, $N=[G_{jl}]_{j,l=1,2,\ldots,n}$.
Throughout this paper we shall assume that all the components of $K$, $K^{\ast }$, $L$, $%
L^{\ast }$, $M$, $M^{\ast }$, $N$ and $N^{\ast }$ have a common invariant
domain $\mathcal{D}$ in $\mathfrak{h}$ (here $^*$ denotes the adjoint of a Hilbert space operator). We further require that the Hudson-Parthasarathy conditions
are satisfied: $N$ is unitary, $%
K+K^{\ast }=-LL^{\ast }$, and $M=-NL^{\ast }$. Note that if the coefficients
are bounded then these conditions are necessary and sufficient for $U(t)$ to
be a unitary co-cycle (if they are unbounded then the solution may not extend
to a unitary co-cycle). In the general case, if $U(t)$ is a well-defined
unitary and $|\psi _{0}\rangle $ is the initial pure state of the composite
system consisting of the system and the fields at time 0, then this state
vector evolves in time in the Schr\"{o}dinger picture as $|\psi (t)\rangle
=U(t)^{\ast }|\psi _{0}\rangle $. We assume throughout that the operator coefficients
of the QSDE satisfy sufficient conditions that guarantee a unique solution which
extends to a unitary co-cycle on $\mathfrak{h}\otimes \Gamma (L_{\mathfrak{K}%
}^{2}[0,\infty ))$ (in particular this will always be the case when the
coefficients are bounded); see, e.g., \cite{Fagno90,FRS94} for the unbounded case.

Note that $\mathbf{G}$ is simply the adjoint of the corresponding {\em left} It\={o} generator matrices introduced for {\em left} QSDEs in \cite{GJ08}, and plays a similar role to the latter for right QSDEs. Since we will be working exclusively with right QSDEs, from this point on when we say It\={o} generator matrix we will mean the right It\={o} generator matrix. 

We use the notation $X^{-}$ for a generalized inverse of an operator $X\in 
\mathfrak{L}\left( \mathfrak{h}\right) $, that is, $XX^{-}X=X$. Throughout, we require that $X,X^*,X^-,X^{-*}$
have $\mathcal{D}$ as invariant domain. Note then that $X^{-*}=(X^{*})^-$.

\begin{definition}
Given a non-trivial decomposition of the coefficient space $\mathfrak{C}=\mathfrak{C}%
_{1}\oplus \mathfrak{C}_{2}$, we define the generalized Schur complement operation of
It\={o} matrices as 
\begin{equation*}
S_{\mathfrak{C}\mapsto \mathfrak{C}_{1}}\mathbf{G}=G_{11}-G_{12}G_{22}^{-}G_{21}
\end{equation*}
where $\mathbf{G}\equiv \left[ 
\begin{array}{cc}
G_{11} & G_{12} \\ 
G_{21} & G_{22}
\end{array}
\right] $ is the partition of $\mathbf{G}$ with respect to the
decomposition. The domain of $S_{\mathfrak{C}\mapsto \mathfrak{C}_{1}}$ is the set
of $\mathbf{G}\in \mathfrak{L}\left( \mathfrak{C}_{1}\oplus \mathfrak{C}_{2}\right) $
for which we have the image and kernel space inclusions ${\rm im}\left(
G_{21}\right) \subseteq ${\rm im}$\left( G_{22}\right) $ and ${\rm ker}\left(
G_{22}\right) \subseteq  {\rm ker}\left( G_{12}\right) $ (this ensures that the
choice of generalized inverse is unimportant; see \cite{GNW10}  and the references therein). $S_{\mathfrak{C}\mapsto \mathfrak{C}%
_{1}}$ maps into the reduced space $\mathfrak{L}\left( \mathfrak{C}_{1}\right) $. We
shall often use the shorthand $\mathbf{G}/G_{22}$ for the generalized Schur complement.
\end{definition}
Of course, if $G_{22} \mid_{\mathcal{D}}$ is invertible then the generalized Schur complement reduces to the ordinary Schur complement
with the generalized inverse $G_{22}^-$ replaced by $(G_{22} \mid_{\mathcal{D}})^{-1}$.

\section{Eliminating internal connections}

\label{sec:IF-Schur}The total multiplicity space $\mathfrak{K}$ may be
decomposed into external and internal elements as follows 
\begin{equation*}
\mathfrak{K}=\mathfrak{K}_{\mathsf{e}}\oplus \mathfrak{K}_{\mathsf{i}},
\end{equation*}
leading to decomposition $\mathfrak{C}=\mathfrak{C}_{\mathsf{e}}\oplus \mathfrak{C}_{%
\mathsf{i}}$ where $\mathfrak{C}_{\mathsf{e}}=\mathfrak{h}\otimes \left( \mathbb{C}%
\oplus \mathfrak{K}_{\mathsf{e}}\right) $. It was shown in \cite{GJ08} that in
the instantaneous feedback limit for the internal connections, the
reduced It\={o} generator matrix is the Schur complement  of the
pre-interconnection network It\={o} generator matrix, $S_{\mathfrak{C}\mapsto 
\mathfrak{C}_{\mathsf{e}}}\mathbf{G}$. With respect to the decomposition $\mathfrak{C%
}=\mathfrak{h}\oplus \left( \mathfrak{h}\otimes \mathfrak{K}_{\mathsf{e}}\right) \oplus
\left( \mathfrak{h}\otimes \mathfrak{K}_{\mathsf{i}}\right) $, we have, with $L=%
\left[ 
\begin{array}{cc}
L_{\mathsf{e}} & L_{\mathsf{i}}
\end{array}
\right] $, $N_{\mathsf{a}}=\left[ 
\begin{array}{cc}
N_{\mathsf{ae}} & N_{\mathsf{ai}}
\end{array}
\right] $, 
\begin{eqnarray*}
&&\left[ 
\begin{array}{ccc}
K & L_{\mathsf{e}} & L_{\mathsf{i}} \\ 
M_{\mathsf{e}} & N_{\mathsf{ee}}-I & N_{\mathsf{ei}} \\ 
M_{\mathsf{i}} & N_{\mathsf{ie}} & N_{\mathsf{ii}}-I
\end{array}
\right] /\left( N_{\mathsf{ii}}-I\right) \\
&=&\left[ 
\begin{array}{cc}
K & L_{\mathsf{e}} \\ 
M_{\mathsf{e}} & N_{\mathsf{ee}}-I
\end{array}
\right] -\left[ 
\begin{array}{c}
L_{\mathsf{i}} \\ 
N_{\mathsf{ei}}
\end{array}
\right]  \left( N_{\mathsf{ii}}-I\right) ^{-1} \left[ 
\begin{array}{cc}
M_{\mathsf{i}} & N_{\mathsf{ie}}
\end{array}
\right].
\end{eqnarray*}
where it is a condition that $N_{\mathsf{ii}}-I$ be invertible for the
network connections to be well-posed. We denote the operation $S_{\mathfrak{C}%
\mapsto \mathfrak{C}_{\mathsf{e}}}$ of instantaneous feedback reduction by  $%
\mathcal{F}$ whenever the context is clear, and for well-posed connections it
maps between the categories of It\={o} generator matrices in $\mathfrak{G}\left( 
\mathfrak{h},\mathfrak{K}\right) $ to $\mathfrak{G}\left( \mathfrak{h},\mathfrak{K}_{\mathsf{e}%
}\right) $ \cite{GJ08}.

\section{Adiabatic elimination of QSDEs: Structural assumptions}
\label{sec:BvHS}
The following section reviews the adiabatic elimination results of Bouten, van Handel and
Silberfarb \cite{BvHS07}. We consider a QSDE of the form 
$$U^{(k)}(t)
=I+\int_{0}^{t}U^{(k)}(s) G^{(k)}_{\alpha \beta}dA^{\alpha \beta}(s),$$ 
where as before $\alpha,\beta = 0,1,\ldots,n$ and $\mathbf{G}^{(k)}=[G^{(k)}_{\alpha \beta}]$ is an Ito generator matrix $\mathbf{G}^{(k)} \in \mathfrak{G}\left( \mathfrak{h},\mathfrak{K}%
\right)$ that can be expressed as
$$
\mathbf{G}^{(k)}=\left[ \begin{array}{cc} K^{(k)} & L^{(k)}\\M^{(k)} & N^{(k)}-I \end{array}\right]
$$
with $K^{(k)}=G^{(k)}_{00}=k^2Y+kA+B$ and $L^{(k)}=[G^{(k)}_{0j}]_{j=1,2,\ldots,n}=kF+G$, $M^{(k)}=[G^{(k)}_{j 0}]_{j=1,2,\ldots,n}$, and $N^{(k)}=[G^{(k)}_{jl}]_{j,l=1,2,\ldots,n}$, and $k$ is a positive parameter representing coupling strength. The operators $Y$, $A$, $B$, $F$, $G$, $N$, and their respective adjoints, have $\mathcal{D}$ as a common invariant domain, and the coefficients satisfy the Hudson-Parthasarathy conditions $K^{(k)}+K^{(k)*}=-L^{(k)}L^{(k)*}$, $M^{(k)}=-N^{(k)*}L^{(k)}$, and $N^{(k)}N^{(k)*}=N^{(k)*}N^{(k)}=I$. In particular, this implies  that $B+B^{\ast }=-GG^{\ast },A+A^{\ast }=-\left( FG^{\ast
}+GF^{\ast }\right) ,Y+Y^{\ast }=-FF^{\ast }$.  The general situation is that there is a decomposition of the initial/system space $\mathfrak{h}_s$.
into slow and fast subspaces (the subscripts ${\tt s}$ and ${\tt f}$ denote fast and slow, respectively): 
\begin{equation*}
\mathfrak{h}=\mathfrak{h}_{\mathtt{s}}\oplus \mathfrak{h}_{\mathtt{f}},
\end{equation*}
Denote the orthogonal projections onto $\mathfrak{h}_{\mathtt{s}},\mathfrak{h}_{\mathtt{f}}$\
by $P_{\mathtt{s}},P_{\mathtt{f}}$, respectively. With an obvious abuse of
notation, we use the same partition for the decomposition of the coefficient
space: $\mathfrak{C}=\mathfrak{C}_{\mathtt{s}}\oplus \mathfrak{C}_{\mathtt{f}}$ where $%
\mathfrak{C}_{\mathtt{a}}=\mathfrak{h}_{\mathtt{a}}\otimes \left( \mathbb{C}\oplus 
\mathfrak{K}\right) $. With respect to the decomposition $\mathfrak{h}_{\mathtt{s}%
}\oplus \mathfrak{h}_{\mathtt{f}}$, one requires  \cite{BvHS07}:

\begin{enumerate}

\item $P_s \mathcal{D} \subset \mathcal{D}$.

\item  $N^{(k)}=N$ is $k$ independent

\item  $P_{\mathtt{s}}F=0$. That is, $F$ has the structure $
F = \left[ 
\begin{array}{ll}
0 & 0 \\ 
F_{\mathtt{fs}} & F_{\mathtt{ff}}
\end{array}
\right]$.

\item  The Hamiltonian $H^{(k)}=\frac{1}{2i}(K^{(k)}-K^{(k)\ast })$ takes
the form $H\left( k\right) =H^{\left( 0\right) }+kH^{\left( 1\right)
}+k^{2}H^{\left( 2\right) }$ where $P_{\mathtt{s}}H^{\left( 1\right) }P_{%
\mathtt{s}}=0$ and $P_{\mathtt{s}}H^{\left( 2\right) }=H^{(2)}P_{\mathtt{s}%
}=0$, that is, 
\begin{equation*}
H =\left[ 
\begin{array}{ll}
H_{\mathtt{ss}}^{\left( 0\right) }, & H_{\mathtt{sf}}^{\left( 0\right) }+kH_{%
\mathtt{sf}}^{\left( 1\right) } \\ 
H_{\mathtt{fs}}^{\left( 0\right) }+kH_{\mathtt{fs}}^{\left( 1\right) }, & H_{%
\mathtt{ff}}^{\left( 0\right) }+kH_{\mathtt{ff}}^{\left( 1\right) }+k^{2}H_{%
\mathtt{ff}}^{\left( 2\right) }
\end{array}
\right] .
\end{equation*}
Conditions 3 and 4 is equivalent to  $Y$ having the structure 
$Y=\left[\begin{array}{cc} 0 & 0 \\ 0 &  P_{\tt f} Y P_{\tt f} \end{array}\right]$.

\item  In the expansion 
\begin{equation*}
K^{\left( k\right)} =-L^{(k)}\dfrac{1}{2}L^{(k){\ast }}-iH^{(k)} \equiv k^{2}Y+kA+B,
\end{equation*}
we require that the operator $Y_{\mathtt{ff}}=-\dfrac{1}{2}\sum_{\mathtt{a}=%
\mathtt{s},\mathtt{f}}F_{\mathtt{fa}}F_{\mathtt{fa}}^{\ast }-iH_{\mathtt{ff}%
}^{\left( 2\right) }$ is invertible. In particular,  Conditions 3 to 5 is equivalent to $Y$ having a generalized
inverse $Y^-$ with the diagonal structure 
$Y^-=\left[\begin{array}{cc} P_{\tt s} Y^- P_{\tt s} & 0 \\ 0 &  Y_{\tt ff}^{-1} \end{array}\right]$.
\end{enumerate}

Employing a repeated index summation convention over the index range $%
\left\{ \mathtt{s},\mathtt{f}\right\} $ from now on, we find that the
operator $B$ has components $B_{\mathtt{ab}}=-\dfrac{1}{2}G_{\mathtt{ca}}G_{%
\mathtt{cb}}^{\ast }-iH_{\mathtt{ab}}^{\left( 0\right) }$ with respect to
the slow-fast block decomposition. Likewise
\begin{eqnarray}
A &\equiv &\left[ 
\begin{array}{cc}
0 & A_{\mathtt{sf}} \\ 
A_{\mathtt{fs}} & A_{\mathtt{ff}}
\end{array}
\right] =\left[ 
\begin{array}{ll}
0 & -\frac{1}{2}G_{\mathtt{sc}}F_{\mathtt{fc}}^{\ast }-iH_{\mathtt{sf}%
}^{\left( 1\right) } \\ 
-\frac{1}{2}F_{\mathtt{fc}}G_{\mathtt{sc}}^{\ast }-iH_{\mathtt{fs}}^{\left(
1\right) } & -\frac{1}{2}F_{\mathtt{fc}}G_{\mathtt{fc}}^{\ast }-\frac{1}{2}%
G_{\mathtt{fc}}F_{\mathtt{fc}}^{\ast }-iH_{\mathtt{ff}}^{\left( 1\right) }
\end{array}
\right]  \notag \\
Y &\equiv &\left[ 
\begin{array}{ll}
0 & 0 \\ 
0 & Y_{\mathtt{ff}}
\end{array}
\right] .  \label{K relations}
\end{eqnarray}
With respect to
the decomposition the decomposition $\mathfrak{C}=\mathfrak{h}_{\mathtt{s}}\oplus
\left( \mathfrak{h}_{\mathtt{s}}\otimes \mathfrak{K}\right) \oplus \mathfrak{h}_{\mathtt{%
f}}\oplus \left( \mathfrak{h}_{\mathtt{f}}\otimes \mathfrak{K}\right) $ we have 
\begin{equation}
\mathbf{G}^{\left( k\right) }=[ 
\begin{array}{cccc}
1 & 1 & k & 1
\end{array}
]\left[ \mathbf{G}_{0}+\mathbf{G}^{\prime }\left( k\right) \right] \left[ 
\begin{array}{c}
1 \\ 
1 \\ 
k \\ 
1
\end{array}
\right]  \label{eq:G(k) partition  slow fast}
\end{equation}
where 
\begin{equation*}
\mathbf{G}_{0}=\left[ 
\begin{array}{cccc}
B_{\mathtt{ss}} & G_{\mathtt{ss}} & A_{\mathtt{sf}} & G_{\mathtt{sf}} \\ 
-N_{\mathtt{sa}}G_{\mathtt{sa}}^{\ast } & N_{\mathtt{ss}}-I & -N_{\mathtt{sa}%
}F_{\mathtt{fa}}^{\ast } & N_{\mathtt{sf}} \\ 
A_{\mathtt{fs}} & F_{\mathtt{fs}} & Y_{\mathtt{ff}} & F_{\mathtt{ff}} \\ 
-N_{\mathtt{fa}}G_{\mathtt{sa}}^{\ast } & N_{\mathtt{fs}} & -N_{\mathtt{fa}%
}F_{f\mathtt{a}}^{\ast } & N_{\mathtt{ff}}-I
\end{array}
\right] ,
\end{equation*}
and $\lim_{k\rightarrow \infty }\mathbf{G}^{\prime }\left( k\right) \phi =0$
for all $\phi \in \mathcal{D}$. We then observe that 
\begin{equation*}
\mathbf{G}_{0}/Y_{\mathtt{ff}}=\left[ 
\begin{array}{ccc}
\hat{K}_{\mathtt{ss}} & \hat{L}_{\mathtt{s}} & \hat{L}_{\mathtt{f}} \\ 
\hat{M}_{\mathtt{s}} & \hat{N}_{\mathtt{ss}}-I & \hat{N}_{\mathtt{sf}} \\ 
\hat{M}_{\mathtt{f}} & \hat{N}_{\mathtt{fs}} & \hat{N}_{\mathtt{ff}}-I
\end{array}
\right]
\end{equation*}
where 
\begin{eqnarray*}
\hat{K}_{\mathtt{ss}} &=&B_{\mathtt{ss}}-A_{\mathtt{sf}}Y_{\mathtt{ff}%
}^{-1}A_{\mathtt{fs}},\;\hat{L}_{\mathtt{a}}=G_{\mathtt{sa}}-A_{\mathtt{sf}%
}Y_{\mathtt{ff}}^{-1}F_{\mathtt{fa}}, \\
\hat{M}_{\mathtt{a}} &=&-N_{\mathtt{ab}}G_{\mathtt{sb}}^{\dag }+N_{\mathtt{ab%
}}F_{\mathtt{fb}}^{\ast }Y_{\mathtt{ff}}^{-1}A_{\mathtt{fs}},\;\hat{N}_{%
\mathtt{ab}}=N_{\mathtt{ab}}+N_{\mathtt{ac}}F_{\mathtt{fc}}^{\ast }Y_{%
\mathtt{ff}}^{-1}F_{\mathtt{fb}}.
\end{eqnarray*}
We also assume that 
\begin{equation}
\hat{L}_{\mathtt{f}}=\hat{N}_{\mathtt{sf}}=\hat{N}_{\mathtt{fs}}=0,
\label{eq:fast}
\end{equation}
and this will ensure that the limit dynamics excludes the possibility of
transitions that terminate in any of the fast states. In this case $\hat{N}_{%
\mathtt{ss}}$ and $\hat{N}_{\mathtt{ff}}$ are unitary. In particular 
\begin{equation}
\mathbf{\hat{G}}=\left[ 
\begin{array}{cc}
\hat{K}_{\mathtt{ss}} & \hat{L}_{\mathtt{s}} \\ 
\hat{M}_{\mathtt{s}} & \hat{N}_{\mathtt{ss}}-I
\end{array}
\right] \equiv \left[ 
\begin{array}{cc}
\hat{K} & \hat{L} \\ 
\hat{M} & \hat{N}-I
\end{array}
\right]  \label{eg:hats}
\end{equation}
is an It\={o} generator matrix $\left( \hat{M}_{\mathtt{s}}=-\hat{N}_{%
\mathtt{ss}}\hat{L}_{\mathtt{s}}^{\ast }\right) $ on the coefficient space $%
\mathfrak{C}_{\mathtt{s}}=\mathfrak{h}_{\mathtt{s}}\otimes \left( \mathbb{C}\oplus 
\mathfrak{K}\right) $. The final assumption is a technical condition. 
For any $%
\alpha ,\beta \in \mathbb{C}^{n}$ (represented as column vectors), $P_{%
\mathtt{s}}\mathcal{D}$ is a core for the operator $\mathcal{L}^{(\alpha
\beta )}$ defined by: 
\begin{equation}
\mathcal{L}^{(\alpha \beta )}=\alpha ^{\ast }\hat{N}\beta +\alpha ^{\ast }%
\hat{M}+\hat{L}\beta +\hat{K}-\frac{|\alpha |^{2}+|\beta |^{2}}{2}, \label{cond:core} 
\end{equation}
with $\hat{K},\hat{L},\hat{M},\hat{N}$ as defined in (\ref{eg:hats}).

\begin{theorem}[\protect\cite{BvHS07}]
\label{thm:s-conv} Suppose that all the assumptions above hold. If the right
QSDEs with coefficients $\mathbf{G}^{\left( k\right) }$ possess a unique
solution that extends to a contraction co-cycle $U^{(k)}(t)$ on $\mathfrak{h}%
\otimes \Gamma \left( L_{\mathfrak{K}}^{2}[0,\infty )\right) $ for all $k>0$,
and the right QSDE with coefficients $\mathbf{\hat{G}}$ has a unique
solution that extends to a unitary co-cycle $\hat{U}(t)$ on $\mathfrak{h}_{%
\mathtt{s}}\otimes \Gamma \left( L_{\mathfrak{K}}^{2}[0,\infty )\right) $, then $%
U^{(k)}(t)$ converges to the solution $\hat{U}(t)$ uniformly in a strong
sense: 
\begin{equation*}
\mathop{\lim}_{k\rightarrow \infty }\mathop{\sup}_{0\leq t\leq T}\Vert
U^{(k)}(t)^{\ast }\phi -\hat{U}(t)^{\ast }\phi \Vert =0,\quad \forall \phi
\in \mathfrak{h}_{\mathtt{s}}\otimes \Gamma (L_{\mathfrak{K}%
}^{2}[0,\infty )),
\end{equation*}
for each fixed $T\geq 0$.
\end{theorem}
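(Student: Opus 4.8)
The plan is to reduce the strong convergence of the (adjoint) cocycles to a singular-perturbation statement for one-parameter contraction semigroups on the initial space, and then to deduce that statement from a Trotter--Kato type approximation theorem; the link between the cocycle and the semigroup picture is exactly the operator $\mathcal{L}^{(\alpha\beta)}$ of the core condition (\ref{cond:core}). First I would fix constant test functions $\alpha,\beta\in\Cbb^{n}$ and sandwich $U^{(k)}(t)$ between the associated exponential (coherent) vectors. A standard computation with the quantum It\={o} rules shows that the resulting matrix elements on the initial space define a strongly continuous contraction semigroup $T^{(k)}_{\alpha\beta}(t)=e^{t\mathcal{L}^{(\alpha\beta)}_{k}}$ on $\mathfrak{h}$, whose generator $\mathcal{L}^{(\alpha\beta)}_{k}$ is built from the coefficients $\mathbf{G}^{(k)}$ by the same formula as (\ref{cond:core}); the generator of the corresponding limit semigroup $T_{\alpha\beta}(t)$ on $\mathfrak{h}_{\mathtt{s}}$ is then $\mathcal{L}^{(\alpha\beta)}$ with the reduced coefficients $\hat{K},\hat{L},\hat{M},\hat{N}$ of (\ref{eg:hats}). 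Since exponential vectors are total in $\Gamma(L^{2}_{\mathfrak{K}}[0,\infty))$ and the $U^{(k)}(t)^{\ast}$ are uniformly (in $k$ and $t$) contractive while $\hat{U}(t)^{\ast}$ is unitary, a standard $\varepsilon/3$ density argument reduces the desired locally uniform strong convergence of the cocycles to the convergence $T^{(k)}_{\alpha\beta}(t)P_{\mathtt{s}}\to T_{\alpha\beta}(t)$, uniformly on compact $t$-intervals, for every pair $(\alpha,\beta)$; the cocycle (flow) property is what guarantees that constant test functions on $[0,t]$ suffice, turning the propagator into a genuine semigroup.

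It then remains to prove this semigroup convergence. The generators $\mathcal{L}^{(\alpha\beta)}_{k}$ carry a singular fast part governed by $k^{2}Y_{\mathtt{ff}}$ together with an order-$k$ term, so the problem is a singular perturbation of Kurtz type in which the fast block is driven to its equilibrium and the effective slow generator is precisely the Schur complement $\mathbf{G}_{0}/Y_{\mathtt{ff}}$; the invertibility of $Y_{\mathtt{ff}}$ (Condition 5) is what makes this reduction well-defined. I would invoke a Trotter--Kato/Kurtz approximation theorem: it suffices to show that $\mathcal{L}^{(\alpha\beta)}_{k}$ converges to $\mathcal{L}^{(\alpha\beta)}$ on a common core and that the limit operator generates a contraction semigroup. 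The core condition (\ref{cond:core}), namely that $P_{\mathtt{s}}\mathcal{D}$ is a core for $\mathcal{L}^{(\alpha\beta)}$, is exactly what is needed to identify the limit semigroup uniquely, while the structural relations (\ref{eq:fast}) guarantee that $\hat{N}_{\mathtt{ss}}$ is unitary, i.e.\ that no probability leaks into the fast states, consistent with the assumed unitarity of $\hat{U}(t)$.

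The main obstacle is the uniform control of the fast subspace. One must establish resolvent estimates for $\mathcal{L}^{(\alpha\beta)}_{k}$ that are uniform in $k$ on the domain $\mathcal{D}$, exhibiting the fast components of the resolvent as quantities of order $k^{-2}$ that are adiabatically slaved to the slow components, so that the boundary layer created by the fast $k^{2}Y_{\mathtt{ff}}$ dynamics decays and what survives in the limit is exactly $\mathcal{L}^{(\alpha\beta)}$. Since $K^{(k)},L^{(k)},M^{(k)},N$ and $Y_{\mathtt{ff}}^{-1}$ are unbounded and share only the common invariant domain $\mathcal{D}$, these estimates cannot be extracted from operator-norm bounds and must be carried out throughout on $\mathcal{D}$; this analytic step, rather than the algebra of the Schur complement, is where the real difficulty lies.
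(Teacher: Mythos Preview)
The paper does not prove this theorem at all: immediately after the statement it simply records ``The above theorem is Theorem 3 of \cite{BvHS07}.'' So there is no argument here to compare your proposal against; the result is quoted from Bouten, van Handel and Silberfarb and used as a black box in the rest of the paper.

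That said, your sketch is a faithful outline of the strategy in \cite{BvHS07}: sandwiching the cocycle between exponential vectors to obtain the associated contraction semigroups $T^{(k)}_{\alpha\beta}$, reducing the cocycle convergence to semigroup convergence by totality of exponential vectors and the cocycle property, and then attacking the semigroup convergence as a Kurtz-type singular perturbation handled via a Trotter--Kato argument, with the core hypothesis on $\mathcal{L}^{(\alpha\beta)}$ pinning down the limit. You are also right that the hard analytic content sits in the uniform resolvent control of the fast block, and that this cannot be done by operator-norm estimates because of unboundedness. If you want to turn the sketch into a proof you would need to supply precisely those resolvent estimates (in \cite{BvHS07} this is organised through a careful study of $(\lambda-\mathcal{L}^{(\alpha\beta)}_{k})^{-1}$ on $\mathcal{D}$ exploiting the block structure induced by $P_{\mathtt{s}},P_{\mathtt{f}}$ and the invertibility of $Y_{\mathtt{ff}}$), but for the purposes of the present paper no proof is expected.
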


The above theorem is Theorem 3 of \cite{BvHS07}.

\section{Adiabatic elimination of QSDEs: Schur complements}
\label{subsec:AE-Schur} In this section we will show how the singular
perturbation limit of the QSDE can be related to the Schur complementation
of a certain matrix with operator entries. To this end, define the extended
It\={o} generator matrix $\mathbf{G}_{E}$ as: 
\begin{equation*}
\mathbf{G}_{E}=\left[ 
\begin{array}{ccc}
B & A_{\mathtt{sf}} & G \\ 
A_{\mathtt{f}} & Y_{\mathtt{ff}} & F_{\mathtt{f}} \\ 
-NG^{\ast } & -NF_{\mathtt{f}}^{\ast } & N-I
\end{array}
\right] ,
\end{equation*}
where $A_{\mathtt{f}}=P_{\mathtt{f}}A$, $F_{\mathtt{f}}=P_{\mathtt{f}}F$.

\begin{lemma}
\label{lm:AE-Schur} The limit QSDE $\hat{U}(t)$ has the It\={o} generator
matrix $\hat{\mathbf{G}}$ given by $\hat{\mathbf{G}}=P_{\mathtt{s}}(\mathbf{G%
}_{E}/Y_{\mathtt{ff}})P_{\mathtt{s}}\mid _{\mathfrak{h}_{\mathtt{s}}}$, where $%
\mathbf{G}_{E}/Y_{\mathtt{ff}}$ is the Schur complement of $\mathbf{G}_{E}$
with respect to the sub-block with entry $Y_{\mathtt{ff}}$.
\end{lemma}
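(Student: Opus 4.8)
The plan is to prove the identity by direct computation: form the (ordinary) Schur complement $\mathbf{G}_{E}/Y_{\mathtt{ff}}$, compress it with $P_{\mathtt{s}}$ on both sides, and check entry-by-entry that the result coincides with the operators $\hat{K}_{\mathtt{ss}},\hat{L}_{\mathtt{s}},\hat{M}_{\mathtt{s}},\hat{N}_{\mathtt{ss}}-I$ already identified as the surviving blocks of $\mathbf{G}_{0}/Y_{\mathtt{ff}}$ in (\ref{eg:hats}). Since $Y_{\mathtt{ff}}$ is invertible by Condition 5, no generalized-inverse subtleties arise and the Schur complement is the classical one; Condition 1 ($P_{\mathtt{s}}\mathcal{D}\subset\mathcal{D}$) guarantees that all the compressions act within the common invariant domain $\mathcal{D}$.

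First I would eliminate the middle block of $\mathbf{G}_{E}$ to obtain a $2\times 2$ operator matrix on $\mathfrak{h}\oplus(\mathfrak{h}\otimes\mathfrak{K})$, with entries $B-A_{\mathtt{sf}}Y_{\mathtt{ff}}^{-1}A_{\mathtt{f}}$, $G-A_{\mathtt{sf}}Y_{\mathtt{ff}}^{-1}F_{\mathtt{f}}$, $-NG^{\ast}+NF_{\mathtt{f}}^{\ast}Y_{\mathtt{ff}}^{-1}A_{\mathtt{f}}$, and $(N-I)+NF_{\mathtt{f}}^{\ast}Y_{\mathtt{ff}}^{-1}F_{\mathtt{f}}$. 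Next I would apply $P_{\mathtt{s}}(\cdot)P_{\mathtt{s}}$, reading $P_{\mathtt{s}}$ as its ampliation $P_{\mathtt{s}}\otimes I$ on the coefficient space $\mathfrak{C}=\mathfrak{h}\otimes(\mathbb{C}\oplus\mathfrak{K})$, and restrict to $\mathfrak{C}_{\mathtt{s}}$. The computation rests on three structural reductions: $P_{\mathtt{s}}A_{\mathtt{sf}}=A_{\mathtt{sf}}$ (since $A_{\mathtt{sf}}$ has range in $\mathfrak{h}_{\mathtt{s}}$), $A_{\mathtt{f}}P_{\mathtt{s}}=P_{\mathtt{f}}AP_{\mathtt{s}}=A_{\mathtt{fs}}$, and $F_{\mathtt{f}}(P_{\mathtt{s}}\otimes I)=P_{\mathtt{f}}F(P_{\mathtt{s}}\otimes I)=F_{\mathtt{fs}}$, the last using Condition 3 ($P_{\mathtt{s}}F=0$). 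Feeding these in, the $(1,1)$ entry collapses to $B_{\mathtt{ss}}-A_{\mathtt{sf}}Y_{\mathtt{ff}}^{-1}A_{\mathtt{fs}}=\hat{K}_{\mathtt{ss}}$ and the $(1,2)$ entry to $G_{\mathtt{ss}}-A_{\mathtt{sf}}Y_{\mathtt{ff}}^{-1}F_{\mathtt{fs}}=\hat{L}_{\mathtt{s}}$; expanding $P_{\mathtt{s}}NG^{\ast}P_{\mathtt{s}}$ and $P_{\mathtt{s}}NF_{\mathtt{f}}^{\ast}$ in the slow/fast index sum recovers $\hat{M}_{\mathtt{s}}=-N_{\mathtt{sb}}G_{\mathtt{sb}}^{\ast}+N_{\mathtt{sb}}F_{\mathtt{fb}}^{\ast}Y_{\mathtt{ff}}^{-1}A_{\mathtt{fs}}$ and $\hat{N}_{\mathtt{ss}}-I$, matching (\ref{eg:hats}) exactly.

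The step I expect to demand the most care is the compression of the first block of $\mathbf{G}_{E}$, which is written over the \emph{full} system space $\mathfrak{h}$ (entries $B$, $G$, $-NG^{\ast}$, $N-I$) rather than over $\mathfrak{h}_{\mathtt{s}}$ alone. One must verify that $P_{\mathtt{s}}(\cdot)P_{\mathtt{s}}$ extracts precisely the slow--slow components of $B$, $G$, $N$ while \emph{every} fast contribution enters only through the $Y_{\mathtt{ff}}^{-1}$ correction terms built from the correctly $P_{\mathtt{f}}$-projected couplings $A_{\mathtt{f}},F_{\mathtt{f}}$; in particular the fast--fast pieces of $B$ and $N$, and the slow--fast field pieces, must be annihilated by the outer projections rather than leaking into the compressed matrix. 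This is exactly what the structural assumptions are there to guarantee --- the block forms of $A$ and $Y$ in (\ref{K relations}) together with Condition 3 --- and once they are invoked the remaining entry-by-entry matching against (\ref{eg:hats}), under the vanishing conditions (\ref{eq:fast}) that single out $\hat{\mathbf{G}}$ as the slow block, is routine bookkeeping.
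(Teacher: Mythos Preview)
Your proposal is correct and follows essentially the same route as the paper: compute the Schur complement $\mathbf{G}_{E}/Y_{\mathtt{ff}}$ to obtain the $2\times 2$ block matrix, compress with $P_{\mathtt{s}}(\cdot)P_{\mathtt{s}}$, restrict to $\mathfrak{h}_{\mathtt{s}}$, and match the resulting entries against $\hat{K}_{\mathtt{ss}},\hat{L}_{\mathtt{s}},\hat{M}_{\mathtt{s}},\hat{N}_{\mathtt{ss}}-I$ from (\ref{eg:hats}). The paper's proof is terser---it writes down (\ref{eq:SC-GE-by-Y}) and (\ref{eq:lim-l-GM-AE}) and then simply invokes (\ref{eq:fast})---whereas you spell out the block reductions explicitly; one small slip is that $F_{\mathtt{f}}(P_{\mathtt{s}}\otimes I)=F_{\mathtt{fs}}$ is immediate from the definition of $F_{\mathtt{f}}=P_{\mathtt{f}}F$ and does not actually require Condition~3.
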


\begin{proof}
Direct calculation shows that 
\begin{align}
\mathbf{G}_{E}/Y_{\mathtt{ff}}& =\left[ 
\begin{array}{cc}
B & G \\ 
-NG^{\ast } & N-I
\end{array}
\right] -\left[ 
\begin{array}{c}
A_{\mathtt{sf}} \\ 
-NF_{\mathtt{f}}^{\ast }
\end{array}
\right] Y_{\mathtt{ff}}^{-1}\left[ 
\begin{array}{cc}
A_{\mathtt{f}} & F
\end{array}
\right] ,  \notag \\
& =\left[ 
\begin{array}{cc}
B-A_{\mathtt{sf}}Y_{\mathtt{ff}}^{-1}A_{\mathtt{f}} & G-A_{\mathtt{sf}}Y_{%
\mathtt{ff}}^{-1}F_{\mathtt{f}} \\ 
-NG^{\ast }+NF_{\mathtt{f}}^{\ast }Y_{\mathtt{ff}}^{-1}A_{\mathtt{f}} & 
N+NF_{\mathtt{f}}^{\ast }Y_{\mathtt{ff}}^{-1}F_{\mathtt{f}}-I
\end{array}
\right] .  \label{eq:SC-GE-by-Y}
\end{align}
Thus: 
\begin{equation*}
P_{\mathtt{s}}(\mathbf{G}_{E}/Y_{\mathtt{ff}})P_{\mathtt{s}}=\left[ 
\begin{array}{cc}
P_{\mathtt{s}}(B-A_{\mathtt{sf}}Y_{\mathtt{ff}}^{-1}A_{\mathtt{f}})P_{%
\mathtt{s}} & P_{\mathtt{s}}(G-A_{\mathtt{sf}}Y_{\mathtt{ff}}^{-1}F_{\mathtt{%
f}})P_{\mathtt{s}} \\ 
P_{\mathtt{s}}(-NG^{\ast }+NF_{\mathtt{f}}^{\ast }Y_{\mathtt{ff}}^{-1}A_{%
\mathtt{f}})P_{\mathtt{s}} & P_{\mathtt{s}}(N+NF_{\mathtt{f}}^{\ast }Y_{%
\mathtt{ff}}^{-1}F_{\mathtt{f}})P_{\mathtt{s}}-P_{\mathtt{s}}
\end{array}
\right] .
\end{equation*}
Therefore, since $P_{\mathtt{s}}(\mathbf{G}_{E}/Y_{\mathtt{ff}})P_{\mathtt{s}}\mid _{\mathfrak{h}_{%
\mathtt{s}}}$ equals
\begin{equation}
\left[ 
\begin{array}{cc}
P_{\mathtt{s}}(B-A_{\mathtt{sf}}Y_{\mathtt{ff}}^{-1}A_{\mathtt{f}})P_{%
\mathtt{s}} & P_{\mathtt{s}}(G-A_{\mathtt{sf}}Y_{\mathtt{ff}}^{-1}F_{\mathtt{%
f}})P_{\mathtt{s}} \\ 
P_{\mathtt{s}}(-NG^{\ast }+NF_{\mathtt{f}}^{\ast }Y_{\mathtt{ff}}^{-1}A_{%
\mathtt{f}})P_{\mathtt{s}} & P_{\mathtt{s}}(N+NF_{\mathtt{f}}^{\ast }Y_{%
\mathrm{ff}}^{-1}F_{\mathtt{f}})P_{\mathtt{s}}-I
\end{array}
\right] ,  \label{eq:lim-l-GM-AE}
\end{equation}
it follows from (\ref{eq:fast}) that $\hat{\mathbf{G}}=P_{\mathtt{s}}(%
\mathbf{G}_{E}/Y_{\mathtt{ff}})P_{\mathtt{s}}\mid _{\mathfrak{h}_{\mathtt{s}}}$
\end{proof}

We then we denote by $\mathcal{A}$ the map that takes $\mathbf{G}^{(k)}$ to
the It\={o} generator matrix $\hat{\mathbf{G}}$ in the lemma by: $\mathcal{A}:\mathbf{G}%
^{(k)}\mapsto \hat{\mathbf{G}}$.

We conclude by remarking that the instantaneous feedback limit operation $\mathcal{F}$
and the adiabatic elimination operations  $\mathcal{A}$ can be cast as structure preserving transformations,
that is, transformations that preserve the structure of It\={o} generators matrices or convert It\={o} generator
matrices to It\={o} generator matrices (possibly of  lower initial space and multiplicity space dimensions). 
 
\section{Sequential application of the instantaneous feedback and adiabatic
elimination operations}

\label{sec:IF-AE-sim}

\subsection{The adiabatic elimination operation followed by the
instantaneous feedback operation}

When the adiabatic elimination operation is first applied followed by the
instantaneous feedback operation we have the following:

\begin{lemma}
\label{lm:IF-AE} Under the standing assumptions in Section \ref{sec:BvHS}, and taking $N_{\mathsf{ii}}+N_{\mathsf{i}}F_{%
\mathrm{f}}^{\ast }Y_{\mathtt{ff}}^{-1}F_{\mathtt{f}\mathsf{i}}-I$ to be
invertible, we have 
\begin{equation*}
P_{\mathtt{s}}\bigl((\mathbf{G}_{E}/Y_{\mathtt{ff}})/(N_{\mathsf{ii}}+N_{%
\mathsf{i}}F_{\mathtt{f}}^{\ast }Y_{\mathtt{ff}}^{-1}F_{\mathtt{f}\mathsf{i}%
}-I)\bigr)P_{\mathtt{s}}=\mathcal{F}\mathcal{A}\mathbf{G}^{(k)},
\end{equation*}
where $F_{\mathtt{f}\mathsf{i}}=P_{\mathtt{f}}F_{\mathsf{i}}$.
\end{lemma}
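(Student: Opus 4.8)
The plan is to recognise the right-hand side as a single internal Schur complement of the already adiabatically-reduced matrix, and then to show that the feedback reduction $\mathcal{F}$ commutes with conjugation by the slow projection $P_{\mathtt{s}}$. Write $\mathbf{H} := \mathbf{G}_{E}/Y_{\mathtt{ff}}$. By Lemma~\ref{lm:AE-Schur} we have $\mathcal{A}\mathbf{G}^{(k)}=\hat{\mathbf{G}}=P_{\mathtt{s}}\mathbf{H}P_{\mathtt{s}}\mid_{\mathfrak{h}_{\mathtt{s}}}$, and the explicit form \eqref{eq:SC-GE-by-Y} shows that the field--field block of $\mathbf{H}$ is $N+NF_{\mathtt{f}}^{\ast}Y_{\mathtt{ff}}^{-1}F_{\mathtt{f}}-I$; partitioning the multiplicity space as $\mathfrak{K}=\mathfrak{K}_{\mathsf{e}}\oplus\mathfrak{K}_{\mathsf{i}}$ identifies its internal--internal sub-block as exactly $\mathbf{H}_{\mathsf{ii}}=N_{\mathsf{ii}}+N_{\mathsf{i}}F_{\mathtt{f}}^{\ast}Y_{\mathtt{ff}}^{-1}F_{\mathtt{f}\mathsf{i}}-I$. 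Hence, partitioning $\mathbf{H}$ with respect to $\mathfrak{C}_{\mathsf{e}}=\mathfrak{h}\oplus(\mathfrak{h}\otimes\mathfrak{K}_{\mathsf{e}})$ and $\mathfrak{h}\otimes\mathfrak{K}_{\mathsf{i}}$ as blocks $\mathbf{H}_{\mathsf{ee}},\mathbf{H}_{\mathsf{ei}},\mathbf{H}_{\mathsf{ie}},\mathbf{H}_{\mathsf{ii}}$, the lemma's right-hand side is $P_{\mathtt{s}}(\mathbf{H}/\mathbf{H}_{\mathsf{ii}})P_{\mathtt{s}}$ while the left-hand side $\mathcal{F}\mathcal{A}\mathbf{G}^{(k)}=\mathcal{F}\hat{\mathbf{G}}$ is the internal Schur complement of $P_{\mathtt{s}}\mathbf{H}P_{\mathtt{s}}$. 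So it suffices to prove $\mathcal{F}(P_{\mathtt{s}}\mathbf{H}P_{\mathtt{s}})=P_{\mathtt{s}}(\mathbf{H}/\mathbf{H}_{\mathsf{ii}})P_{\mathtt{s}}$.

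Next I would extract the structural input from \eqref{eq:fast}. The conditions $\hat{L}_{\mathtt{f}}=\hat{N}_{\mathtt{sf}}=\hat{N}_{\mathtt{fs}}=0$ say precisely that, in the slow/fast splitting $\mathfrak{h}=\mathfrak{h}_{\mathtt{s}}\oplus\mathfrak{h}_{\mathtt{f}}$, the field--field block $\hat{N}:=N+NF_{\mathtt{f}}^{\ast}Y_{\mathtt{ff}}^{-1}F_{\mathtt{f}}$ of $\mathbf{H}$ is block-diagonal, $P_{\mathtt{s}}\hat{N}P_{\mathtt{f}}=P_{\mathtt{f}}\hat{N}P_{\mathtt{s}}=0$, and that the field-to-system block $\hat{L}$ obeys $P_{\mathtt{s}}\hat{L}P_{\mathtt{f}}=0$; the It\={o} relation $\hat{M}=-\hat{N}\hat{L}^{\ast}$ then also gives $P_{\mathtt{f}}\hat{M}P_{\mathtt{s}}=0$. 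These relations hold at every multiplicity index, so the internal column $\mathbf{H}_{\mathsf{ei}}=[\hat{L}_{\mathsf{i}};\hat{N}_{\mathsf{ei}}]$ satisfies $P_{\mathtt{s}}\mathbf{H}_{\mathsf{ei}}P_{\mathtt{f}}=0$ (both its system row, by $P_{\mathtt{s}}\hat{L}P_{\mathtt{f}}=0$, and its external-field row, by block-diagonality of $\hat{N}$), while $\mathbf{H}_{\mathsf{ii}}$ commutes with $P_{\mathtt{s}}$. Since $\mathbf{H}_{\mathsf{ii}}$ is assumed invertible and block-diagonal in the slow/fast splitting, its inverse is block-diagonal as well, so $\mathbf{H}_{\mathsf{ii}}^{-1}$ commutes with $P_{\mathtt{s}}$ and $(P_{\mathtt{s}}\mathbf{H}_{\mathsf{ii}}P_{\mathtt{s}}\mid_{\mathfrak{h}_{\mathtt{s}}})^{-1}=P_{\mathtt{s}}\mathbf{H}_{\mathsf{ii}}^{-1}P_{\mathtt{s}}\mid_{\mathfrak{h}_{\mathtt{s}}}$; in particular the slow internal block inverted in forming $\mathcal{F}\hat{\mathbf{G}}$ is invertible, so both sides are well-defined.

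Finally I would compare the two sides by inserting $I=P_{\mathtt{s}}+P_{\mathtt{f}}$. Since $\mathbf{H}_{\mathsf{ee}}$ contributes the same $P_{\mathtt{s}}\mathbf{H}_{\mathsf{ee}}P_{\mathtt{s}}$ to both expressions, only the interaction terms must be matched: $\mathcal{F}\hat{\mathbf{G}}$ contains $(P_{\mathtt{s}}\mathbf{H}_{\mathsf{ei}}P_{\mathtt{s}})(P_{\mathtt{s}}\mathbf{H}_{\mathsf{ii}}P_{\mathtt{s}})^{-1}(P_{\mathtt{s}}\mathbf{H}_{\mathsf{ie}}P_{\mathtt{s}})$ and $P_{\mathtt{s}}(\mathbf{H}/\mathbf{H}_{\mathsf{ii}})P_{\mathtt{s}}$ contains $P_{\mathtt{s}}\mathbf{H}_{\mathsf{ei}}\mathbf{H}_{\mathsf{ii}}^{-1}\mathbf{H}_{\mathsf{ie}}P_{\mathtt{s}}$. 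Using that $\mathbf{H}_{\mathsf{ii}}^{-1}$ commutes with $P_{\mathtt{s}},P_{\mathtt{f}}$, I expand $P_{\mathtt{s}}\mathbf{H}_{\mathsf{ei}}\mathbf{H}_{\mathsf{ii}}^{-1}\mathbf{H}_{\mathsf{ie}}P_{\mathtt{s}}=P_{\mathtt{s}}\mathbf{H}_{\mathsf{ei}}(P_{\mathtt{s}}+P_{\mathtt{f}})\mathbf{H}_{\mathsf{ii}}^{-1}(P_{\mathtt{s}}+P_{\mathtt{f}})\mathbf{H}_{\mathsf{ie}}P_{\mathtt{s}}$; the two mismatched cross terms vanish by block-diagonality of $\mathbf{H}_{\mathsf{ii}}^{-1}$, the pure-fast term $P_{\mathtt{s}}\mathbf{H}_{\mathsf{ei}}P_{\mathtt{f}}\mathbf{H}_{\mathsf{ii}}^{-1}P_{\mathtt{f}}\mathbf{H}_{\mathsf{ie}}P_{\mathtt{s}}$ vanishes because $P_{\mathtt{s}}\mathbf{H}_{\mathsf{ei}}P_{\mathtt{f}}=0$, and the surviving pure-slow term is exactly $(P_{\mathtt{s}}\mathbf{H}_{\mathsf{ei}}P_{\mathtt{s}})(P_{\mathtt{s}}\mathbf{H}_{\mathsf{ii}}P_{\mathtt{s}})^{-1}(P_{\mathtt{s}}\mathbf{H}_{\mathsf{ie}}P_{\mathtt{s}})$. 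The two interaction terms therefore coincide and the identity follows.

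The main obstacle I anticipate is bookkeeping the two independent splittings at once---the slow/fast splitting of $\mathfrak{h}$ carried by $P_{\mathtt{s}},P_{\mathtt{f}}$ and the external/internal splitting of $\mathfrak{K}$ carried by the $\mathsf{e}/\mathsf{i}$ labels---and in particular verifying that the decoupling consequences of \eqref{eq:fast} genuinely act blockwise across all multiplicity indices, so that $P_{\mathtt{s}}\mathbf{H}_{\mathsf{ei}}P_{\mathtt{f}}=0$ holds simultaneously for its system row $\hat{L}_{\mathsf{i}}$ and its field row $\hat{N}_{\mathsf{ei}}$. The unbounded-operator domain issues (all manipulations taking place on the common invariant core $\mathcal{D}$, and the generalized inverses together with $\mathbf{H}_{\mathsf{ii}}^{-1}$ preserving $\mathcal{D}$) should be routine under the standing assumptions but must be flagged.
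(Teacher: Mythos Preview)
Your proof is correct and takes essentially the same approach as the paper's: both arguments compute $\mathbf{G}_{E}/Y_{\mathtt{ff}}$, observe that the assumptions~\eqref{eq:fast} force the internal field--field block $\hat{N}_{\mathsf{ii}}$ to be block-diagonal in the slow/fast splitting and the $L$-type blocks to satisfy $P_{\mathtt{s}}(G_{\mathsf{a}}-A_{\mathtt{sf}}Y_{\mathtt{ff}}^{-1}F_{\mathtt{f}\mathsf{a}})P_{\mathtt{f}}=0$, and then use these structural facts to commute $P_{\mathtt{s}}$ through the internal Schur complement. The paper states the explicit block representations of the relevant operators where you invoke~\eqref{eq:fast} directly, but the content is the same; one cosmetic slip is that you have interchanged the labels ``left-hand side'' and ``right-hand side'' of the lemma in your first paragraph.
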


\begin{proof}
Partition the extended It\={o} generator with respect to $\mathfrak{K}_{\mathsf{e%
}}\oplus \mathfrak{K}_{\mathsf{i}}$ to get 
\begin{gather*}
\mathbf{G}_{E}/Y_{\mathtt{ff}}=\left[ 
\begin{array}{cccc}
B & A_{\mathtt{sf}} & G_{\mathsf{i}} & G_{\mathsf{e}} \\ 
A_{\mathtt{f}} & Y_{\mathtt{ff}} & F_{\mathtt{f}\mathsf{i}} & F_{\mathtt{f}%
\mathsf{e}} \\ 
-N_{\mathsf{i}}G^{\ast } & -N_{\mathsf{i}}F_{\mathtt{f}}^{\ast } & N_{%
\mathsf{ii}}-I & N_{\mathsf{ie}} \\ 
-N_{\mathsf{e}}G^{\ast } & -N_{\mathsf{e}}F_{\mathtt{f}}^{\ast } & N_{%
\mathsf{ei}} & N_{\mathsf{ee}}-I
\end{array}
\right] /Y_{\mathtt{ff}} \\
=\left[ 
\begin{array}{cc}
B-A_{\mathtt{sf}}Y_{\mathtt{ff}}^{-1}A_{\mathtt{f}} & G_{\mathsf{i}}-A_{%
\mathtt{sf}}Y_{\mathtt{ff}}^{-1}F_{\mathtt{f}\mathsf{i}} \\ 
-N_{\mathsf{i}}G^{\ast }+N_{\mathsf{i}}F_{\mathtt{f}}^{\ast }Y_{\mathtt{ff}%
}^{-1}A_{\mathtt{f}} & N_{\mathsf{ii}}+N_{\mathsf{i}}F_{\mathtt{f}}^{\ast
}Y_{\mathtt{ff}}^{-1}F_{\mathtt{f}\mathsf{i}}-I \\ 
-N_{\mathsf{e}}G^{\ast }+N_{\mathsf{e}}F_{\mathtt{f}}^{\ast }Y_{\mathtt{ff}%
}^{-1}A_{\mathtt{f}} & N_{\mathsf{ei}}+N_{\mathsf{e}}F_{\mathtt{f}}^{\ast
}Y_{\mathtt{ff}}^{-1}F_{\mathtt{f}\mathsf{i}}
\end{array}
\right.  \\
\left. 
\begin{array}{c}
G_{\mathsf{e}}-A_{\mathtt{sf}}Y_{\mathtt{ff}}^{-1}F_{\mathtt{f}\mathsf{e}}
\\ 
N_{\mathsf{ie}}+N_{\mathsf{i}}F_{\mathtt{f}}^{\ast }Y_{\mathtt{ff}}^{-1}F_{%
\mathtt{f}\mathsf{e}} \\ 
N_{\mathsf{ee}}+N_{\mathsf{e}}F_{\mathtt{f}}^{\ast }Y_{\mathtt{ff}}^{-1}F_{%
\mathtt{f}\mathsf{e}}-I
\end{array}
\right] \equiv \left[ 
\begin{array}{ccc}
\hat{B} & \hat{G}_{\mathsf{i}} & \hat{G}_{\mathsf{e}} \\ 
\hat{M}_{\mathsf{i}} & \hat{N}_{\mathsf{ii}}-I & \hat{N}_{\mathsf{ie}} \\ 
\hat{M}_{\mathsf{e}} & \hat{N}_{\mathsf{ei}} & \hat{N}_{\mathsf{ee}}-I
\end{array}
\right] 
\end{gather*}
where $N_{\mathsf{a}}=\left[\begin{array}{cc}  N_{\mathsf{ae}} & N_{\mathsf{ai}}\end{array} \right] $, $F_{%
\mathtt{f}\mathsf{a}}=P_{\mathtt{f}}F_{\mathsf{a}}$ for $\mathsf{a}=\mathsf{i%
},\mathsf{e}$, and $[
\begin{array}{cc}
F_{\mathsf{i}} & F_{\mathsf{e}}
\end{array}
]=F$ and $[
\begin{array}{cc}
G_{\mathsf{i}} & G_{\mathsf{e}}
\end{array}
]=G$, and we used (\ref{eq:SC-GE-by-Y}). We now apply the operation $%
\mathcal{F}$ to get $(\mathbf{G}_{E}/Y_{\mathtt{ff}})$
$/\left( \hat{N}_{\mathsf{ii} }-I\right)$ equal to
\begin{equation*}
 \left[ 
\begin{array}{cc}
\hat{B}-\hat{G}_{\mathsf{i}}\left( \hat{N}_{\mathsf{ii}}-I\right) ^{-1}\hat{M%
}_{\mathsf{i}} & \hat{G}_{\mathsf{e}}-\hat{G}_{\mathsf{i}}\left( \hat{N}_{%
\mathsf{ii}}-I\right) ^{-1}\hat{N}_{\mathsf{ie}} \\ 
\hat{M}_{\mathsf{e}}-\hat{N}_{\mathsf{ei}}\left( \hat{N}_{\mathsf{ii}%
}-I\right) ^{-1}\hat{M}_{\mathsf{i}} & \hat{N}_{\mathsf{ee}}-\hat{N}_{%
\mathsf{ei}}\left( \hat{N}_{\mathsf{ii}}-I\right) ^{-1}\hat{N}_{\mathsf{ie}}-I
\end{array}
\right] .
\end{equation*}

Next, note that $N_{\mathsf{ii}}+N_{\mathsf{i}}F_{\mathtt{f}}^{\ast }Y_{%
\mathtt{ff}}^{-1}F_{\mathtt{f}\mathsf{i}}$ has the representation 
\begin{equation*}
N_{\mathsf{ii}}+N_{\mathsf{i}}F_{\mathtt{f}}^{\ast }Y_{\mathtt{ff}}^{-1}F_{%
\mathtt{f}\mathsf{i}}=\left[ 
\begin{array}{cc}
P_{\mathtt{f}}(N_{\mathsf{ii}}+N_{\mathsf{i}}F_{\mathtt{f}}^{\ast }Y_{%
\mathtt{ff}}^{-1}F_{\mathtt{f}\mathsf{i}})P_{\mathtt{f}} & 0 \\ 
0 & P_{\mathtt{s}}(N_{\mathsf{ii}}+N_{\mathsf{i}}F_{\mathtt{f}}^{\ast }Y_{%
\mathtt{ff}}^{-1}F_{\mathtt{f}\mathsf{i}})P_{\mathtt{s}}
\end{array}
\right] ,
\end{equation*}
with respect to the decomposition $\mathcal{D}=P_{\mathtt{f}}\mathcal{D}%
\oplus P_{\mathtt{s}}\mathcal{D}$. Moreover, we also note the representation 
\begin{equation*}
G_{\mathsf{a}}-A_{\mathtt{sf}}Y_{\mathtt{ff}}^{-1}F_{\mathtt{f}\mathsf{a}}=%
\left[ 
\begin{array}{cc}
P_{\mathtt{f}}G_{\mathsf{a}}P_{\mathtt{f}} & P_{\mathtt{f}}G_{\mathsf{a}}P_{%
\mathtt{s}} \\ 
0 & P_{\mathtt{s}}(G_{\mathsf{a}}-A_{\mathtt{sf}}Y_{\mathtt{ff}}^{-1}F_{%
\mathtt{f}\mathsf{a}})P_{\mathtt{s}}
\end{array}
\right] ,\;\mathsf{a}=\mathsf{i},\mathsf{e}.
\end{equation*}
Using these representations we can verify the following sequence of
identities: 
\begin{multline*}
P_{\mathtt{s}}(\mathbf{G}_{E}/Y_{\mathtt{ff}})/\left( \hat{N}_{\mathsf{ii}%
}-I\right) P_{\mathtt{s}}=P_{\mathtt{s}}(\mathbf{G}_{E}/Y_{\mathtt{ff}})P_{%
\mathtt{s}}/P_{\mathtt{s}}(N_{\mathsf{ii}}+N_{\mathsf{i}}F_{\mathtt{f}%
}^{\ast }Y_{\mathtt{ff}}^{-1}F_{\mathtt{f}\mathsf{i}}-I)P_{\mathtt{s}}, \\
=(\left. P_{\mathtt{s}}(\mathbf{G}_{E}/Y_{\mathtt{ff}})P_{\mathtt{s}}\right|
_{\mathfrak{h}_{\mathtt{s}}})/(P_{\mathtt{s}}(N_{\mathsf{ii}}+N_{\mathsf{i}%
}F_{\mathtt{f}}^{\ast }Y_{\mathtt{ff}}^{-1}F_{\mathtt{f}\mathsf{i}})P_{%
\mathtt{s}}-I),
\end{multline*}
where the last equality follows from the fact that $P_{\mathtt{s}}(N_{%
\mathsf{ii}}+N_{\mathsf{i}}F_{\mathtt{f}}^{\ast }Y_{\mathtt{ff}}^{-1}F_{%
\mathtt{f}\mathsf{i}}-I)P_{\mathtt{s}}\mid _{\mathfrak{h}_{\mathtt{s}}}=P_{%
\mathtt{s}}(N_{\mathsf{ii}}+N_{\mathsf{i}}F_{\mathtt{f}}^{\ast }Y_{\mathtt{ff%
}}^{-1}F_{\mathtt{f}\mathsf{i}})P_{\mathtt{s}}-I$. Finally, since 
\begin{equation*}
\mathcal{F}\mathcal{A}\mathbf{G}^{(k)}=(P_{\mathtt{s}}(\mathbf{G}_{E}/Y_{%
\mathtt{ff}})P_{\mathtt{s}}\mid _{\mathfrak{h}_{\mathtt{s}}})/(P_{\mathtt{s}}(N_{%
\mathsf{ii}}+N_{\mathsf{i}}F_{\mathtt{f}}^{\ast }Y_{\mathtt{ff}}^{-1}F_{%
\mathtt{f}\mathsf{i}})P_{\mathtt{s}}-I),
\end{equation*}
by definition, we thus obtain the desired result.
\end{proof}

\subsection{The instantaneous feedback operation followed by the adiabatic
elimination operation}

We now turn to consider the alternative sequence of first applying the
instantaneous feedback operation followed by the adiabatic elimination
operation. The main result in this section is the following:

\begin{lemma}
\label{lm:AE-IF} Suppose that the assumptions of Section \ref{sec:BvHS} are satisfied, $N_{\mathsf{ii}%
}-I$ is invertible, $\mathsf{\ker }(Y+F_{\mathsf{i}}(N_{\mathsf{ii}%
}-I)^{-1}F_{\mathsf{i}})=\mathfrak{h}_{\mathtt{s}}$, and there exists an
operator $\hat{Y}^{-}$ such that $\hat{Y}^{-},\hat{Y}^{-\ast }$ have $%
\mathcal{D}$ as a common invariant domain and $\hat{Y}\hat{Y}^{-}=\hat{Y}^{-}%
\hat{Y}=P_{\mathtt{f}}$, where $\hat{Y}=Y+F_{\mathsf{i}}(N_{\mathsf{ii}%
}-I)^{-1}F_{\mathsf{i}}$. Then 
\begin{equation*}
\mathcal{A}\mathcal{F}\mathbf{G}^{(k)}=P_{\mathtt{s}}((\mathbf{G}_{E}/(N_{%
\mathsf{ii}}-I))/(Y_{\mathtt{ff}}+F_{\mathtt{f}\mathsf{i}}(N_{\mathsf{ii}%
}-I)^{-1}N_{\mathsf{i}}F_{\mathtt{f}}^{\ast }))P_{\mathtt{s}}\mid _{\mathfrak{h}%
_{\mathtt{s}}}.
\end{equation*}
\end{lemma}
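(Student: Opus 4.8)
The plan is to mirror the proof of Lemma~\ref{lm:IF-AE}, but with the fast-block and internal-field Schur complements interchanged. Since $\mathcal{A}$ is defined by extending a generator to $\mathbf{G}_E$ and then Schur-complementing out the fast block, the crux is to show that forming the extended matrix of $\mathcal{F}\mathbf{G}^{(k)}$ is the same as first forming $\mathbf{G}_E$ and then eliminating the internal-field block, i.e.\ $(\mathcal{F}\mathbf{G}^{(k)})_{E}=\mathbf{G}_E/(N_{\mathsf{ii}}-I)$. Granting this, the definition of $\mathcal{A}$ rewrites $\mathcal{A}\mathcal{F}\mathbf{G}^{(k)}=P_{\mathtt{s}}\bigl((\mathbf{G}_E/(N_{\mathsf{ii}}-I))/\hat Y_{\mathtt{ff}}\bigr)P_{\mathtt{s}}\mid_{\mathfrak{h}_{\mathtt{s}}}$, where $\hat Y_{\mathtt{ff}}$ is the fast-fast corner of $\mathbf{G}_E/(N_{\mathsf{ii}}-I)$; identifying $\hat Y_{\mathtt{ff}}$ with $Y_{\mathtt{ff}}+F_{\mathtt{f}\mathsf{i}}(N_{\mathsf{ii}}-I)^{-1}N_{\mathsf{i}}F_{\mathtt{f}}^{\ast}$ then yields the asserted identity.

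First I would compute $\mathcal{F}\mathbf{G}^{(k)}=\mathbf{G}^{(k)}/(N_{\mathsf{ii}}-I)$ from the formula of Section~\ref{sec:IF-Schur}, using $L^{(k)}=kF+G$, the $k$-independence of $N$ (Condition~2), and the Hudson--Parthasarathy form of $M^{(k)}$ recorded in the bottom row of $\mathbf{G}_E$, and then read off the $k^{2},k,k^{0}$ parts of the reduced $\tilde K^{(k)}$ and the $k,k^{0}$ parts of the reduced $\tilde L_{\mathsf{e}}^{(k)}$. Writing $\hat Y$ for the resulting $k^{2}$-coefficient, the same block manipulation that gave (\ref{eq:SC-GE-by-Y}) shows that the fast corner of $\mathbf{G}_E/(N_{\mathsf{ii}}-I)$ equals $Y_{\mathtt{ff}}+F_{\mathtt{f}\mathsf{i}}(N_{\mathsf{ii}}-I)^{-1}N_{\mathsf{i}}F_{\mathtt{f}}^{\ast}$, the fast block in the target formula. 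The two additional hypotheses are exactly what make $\mathcal{A}$ applicable to $\mathcal{F}\mathbf{G}^{(k)}$: because $P_{\mathtt{s}}F=0$ (Condition~3) forces $F^{\ast}P_{\mathtt{s}}=0$, the correction term annihilates $\mathfrak{h}_{\mathtt{s}}$ and so $\mathfrak{h}_{\mathtt{s}}\subseteq\ker\hat Y$ automatically, whereupon $\ker\hat Y=\mathfrak{h}_{\mathtt{s}}$ amounts to invertibility of the new fast block (the analogue of Condition~5), and the hypothesized $\hat Y^{-}$ with $\hat Y\hat Y^{-}=\hat Y^{-}\hat Y=P_{\mathtt{f}}$ is the block-diagonal generalized inverse (the analogue of Condition~6). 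I would also check that Conditions~3 and~4 persist under $\mathcal{F}$; for instance the $k$-part of $\tilde L_{\mathsf{e}}^{(k)}$ is $F_{\mathsf{e}}-F_{\mathsf{i}}(N_{\mathsf{ii}}-I)^{-1}N_{\mathsf{ie}}$, still annihilated on the left by $P_{\mathtt{s}}$.

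With $(\mathcal{F}\mathbf{G}^{(k)})_{E}=\mathbf{G}_E/(N_{\mathsf{ii}}-I)$ established, applying $\mathcal{A}$ amounts to Schur-complementing out $\hat Y_{\mathtt{ff}}$ and then sandwiching with $P_{\mathtt{s}}(\cdot)P_{\mathtt{s}}\mid_{\mathfrak{h}_{\mathtt{s}}}$. The remaining projection and generalized-inverse bookkeeping runs exactly as in Lemma~\ref{lm:IF-AE}: with respect to $\mathcal{D}=P_{\mathtt{f}}\mathcal{D}\oplus P_{\mathtt{s}}\mathcal{D}$ one puts $\hat Y$ and the slow--fast couplings into block form, commutes $P_{\mathtt{s}}$ through the Schur complement, and replaces the generalized inverse $\hat Y^{-}$ by the genuine inverse $\hat Y_{\mathtt{ff}}^{-1}$ on $\mathfrak{h}_{\mathtt{f}}$. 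I expect the main obstacle to be the identity $(\mathcal{F}\mathbf{G}^{(k)})_{E}=\mathbf{G}_E/(N_{\mathsf{ii}}-I)$ itself: one must verify that feedback reduction commutes with the extension $\mathbf{G}\mapsto\mathbf{G}_E$ and preserves every structural assumption of Section~\ref{sec:BvHS}, which is delicate here because eliminating the internal field changes the fast block from $Y_{\mathtt{ff}}$ to $\hat Y_{\mathtt{ff}}$; one must confirm both that this modified block retains the required slow--fast kernel and invertibility structure and that no spurious slow--fast couplings spoil the block forms of $F$, $Y$ and $H$ --- precisely the information packaged into the two extra hypotheses.
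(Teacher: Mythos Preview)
Your strategy hinges on the identity $(\mathcal{F}\mathbf{G}^{(k)})_{E}=\mathbf{G}_E/(N_{\mathsf{ii}}-I)$, and that identity is false. Compute the $k$-coefficient of the reduced $\tilde K^{(k)}$: with $L^{(k)}_{\mathsf{i}}=kF_{\mathsf{i}}+G_{\mathsf{i}}$ and $M^{(k)}_{\mathsf{i}}=-N_{\mathsf{i}}(kF+G)^{\ast}$ one gets
\[
\tilde A \;=\; A+F_{\mathsf{i}}(N_{\mathsf{ii}}-I)^{-1}N_{\mathsf{i}}G^{\ast}+G_{\mathsf{i}}(N_{\mathsf{ii}}-I)^{-1}N_{\mathsf{i}}F^{\ast},
\]
so the $(2,1)$ entry of $(\mathcal{F}\mathbf{G}^{(k)})_{E}$, namely $\tilde A_{\mathtt{f}}=P_{\mathtt{f}}\tilde A$, equals
\[
A_{\mathtt{f}}+F_{\mathtt{f}\mathsf{i}}(N_{\mathsf{ii}}-I)^{-1}N_{\mathsf{i}}G^{\ast}+P_{\mathtt{f}}G_{\mathsf{i}}(N_{\mathsf{ii}}-I)^{-1}N_{\mathsf{i}}F^{\ast}.
\]
On the other hand the $(2,1)$ entry of $\mathbf{G}_{E}/(N_{\mathsf{ii}}-I)$ is only $A_{\mathtt{f}}+F_{\mathtt{f}\mathsf{i}}(N_{\mathsf{ii}}-I)^{-1}N_{\mathsf{i}}G^{\ast}$: the term $P_{\mathtt{f}}G_{\mathsf{i}}(N_{\mathsf{ii}}-I)^{-1}N_{\mathsf{i}}F^{\ast}$ is missing, and there is no reason for it to vanish. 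The same mismatch occurs in the $(1,2)$ entry, where $(\mathcal{F}\mathbf{G}^{(k)})_{E}$ has $\tilde A_{\mathtt{sf}}=A_{\mathtt{sf}}+P_{\mathtt{s}}G_{\mathsf{i}}(N_{\mathsf{ii}}-I)^{-1}N_{\mathsf{i}}F_{\mathtt{f}}^{\ast}$ while $\mathbf{G}_{E}/(N_{\mathsf{ii}}-I)$ has $A_{\mathtt{sf}}+G_{\mathsf{i}}(N_{\mathsf{ii}}-I)^{-1}N_{\mathsf{i}}F_{\mathtt{f}}^{\ast}$. In short, extension to $\mathbf{G}_E$ does \emph{not} commute with feedback reduction.

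What rescues the lemma is that the discrepancy lives entirely in the $P_{\mathtt{f}}$-component of the first row and in the $F^{\ast}P_{\mathtt{s}}=0$ part of the second row, so it is annihilated by the outer $P_{\mathtt{s}}(\cdot)P_{\mathtt{s}}$ that is applied \emph{after} the second Schur complement. This is precisely why the paper does not attempt your shortcut: it computes $(\mathcal{F}\mathbf{G}^{(k)})_{E}$ and $\mathbf{G}_{E}/(N_{\mathsf{ii}}-I)$ separately, takes the respective Schur complements by $\hat Y_{\mathtt{ff}}$, and then compares the results entry by entry only after the $P_{\mathtt{s}}$ sandwich and restriction to $\mathfrak{h}_{\mathtt{s}}$. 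Your plan can be repaired by replacing the false equality with the weaker statement that the two extended matrices agree modulo terms killed by the final projection, but that is essentially the paper's direct computation rather than a commuting-diagram argument.
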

\begin{proof}
We first compute the extended It\={o} generator matrix corresponding to $\mathcal{F}%
\mathbf{G}^{(k)}$. With $\tilde{N}_{\mathsf{ee}}=N_{\mathsf{ee}}-N_{\mathsf{%
ei}}(N_{\mathsf{ii}}-I)^{-1}N_{\mathsf{ie}}$ this is 
\begin{align*}
\lefteqn{(\mathcal{F}\mathbf{G}^{(k)})_{E}} \\
& =\left[ 
\begin{array}{c}
B+G_{\mathsf{i}}(N_{\mathsf{ii}}-I)^{-1}N_{\mathsf{i}}G^{\ast } \\ 
A_{\mathtt{f}}+F_{\mathtt{f}\mathsf{i}}(N_{\mathsf{ii}}-I)^{-1}N_{\mathsf{i}%
}G^{\ast }+P_{\mathtt{f}}G_{\mathsf{i}}(N_{\mathsf{ii}}-I)^{-1}N_{\mathsf{i}%
}F^{\ast } \\ 
-\tilde{N}_{\mathsf{ee}}(G_{\mathsf{e}}-G_{\mathsf{i}}(N_{\mathsf{ii}%
}-I)^{-1}N_{\mathsf{ie}})^{\ast }
\end{array}
\right. \\
& \quad \left. 
\begin{array}{c}
A_{\mathtt{sf}}+P_{\mathtt{s}}G_{\mathsf{i}}(N_{\mathsf{ii}}-I)^{-1}N_{%
\mathsf{i}}F_{\mathtt{f}}^{\ast } \\ 
Y_{\mathtt{ff}}+F_{\mathtt{f}\mathsf{i}}(N_{\mathsf{ii}}-I)^{-1}N_{\mathsf{i}%
}F_{\mathtt{f}}^{\ast } \\ 
-\tilde{N}_{\mathsf{ee}}(F_{\mathtt{f}\mathsf{e}}-F_{\mathtt{f}\mathsf{i}%
}(N_{\mathsf{ii}}-I)^{-1}N_{\mathsf{ie}})^{\ast }
\end{array}
\begin{array}{c}
G_{\mathsf{e}}-G_{\mathsf{i}}(N_{\mathsf{ii}}-I)^{-1}N_{\mathsf{ie}} \\ 
F_{\mathtt{f}\mathsf{e}}-F_{\mathtt{f}\mathsf{i}}(N_{\mathsf{ii}}-I)^{-1}N_{%
\mathsf{ie}} \\ 
\tilde{N}_{\mathsf{ee}}-I
\end{array}
\right]
\end{align*}
Let $\hat{Y}=Y+F_{\mathrm{i}}(N_{\mathrm{ii}}-I)^{-1}N_{\mathrm{i}}F^{\ast }$. 
Then under the structural assumptions of Section \ref{sec:BvHS} and the hypothesis that \textrm{ker}$
(Y+F_{\mathsf{i}}(N_{\mathsf{ii}}-I)^{-1}N_{\mathsf{i}}F^{\ast })\,=\,$%
\textrm{ker}$(Y)$, we have that $\hat{Y}$ has a representation, with respect to the decomposition $\mathcal{D}=P_{\mathtt{f}}\mathcal{D}%
\oplus P_{\mathtt{s}}\mathcal{D}$, with the special structure: 
\begin{equation*}
\hat{Y}=\left[ 
\begin{array}{cc}
P_{\mathtt{f}}YP_{\mathtt{f}}+P_{\mathtt{f}}F_{\mathsf{i}}(N_{\mathsf{ii}%
}-I)^{-1}N_{\mathsf{i}}F^{\ast }P_{\mathtt{f}} & 0 \\ 
0 & 0
\end{array}
\right] =\left[ 
\begin{array}{cc}
Y_{\mathtt{ff}}+F_{\mathtt{f}\mathsf{i}}(N_{\mathsf{ii}}-I)^{-1}N_{\mathsf{i}}F_{\mathtt{f}%
}^{\ast } & 0 \\ 
0 & 0
\end{array}
\right] .
\end{equation*}
Moreover, since there exists an operator $\hat{Y}^{-}$ that satisfy the
hypothesis of the theorem we have that $\hat{Y}^{-}=(Y+F_{\mathsf{i}}(N_{%
\mathsf{ii}}-I)^{-1}N_{\mathsf{i}}F^{\ast })^{-}$ with respect to the same decomposition has the diagonal structure 
\begin{equation*}
\hat{Y}^{-}=\left[ 
\begin{array}{cc}
P_{\mathtt{f}}\hat{Y}^{-}P_{\mathtt{f}} & 0 \\ 
0 & P_{\mathtt{s}}\hat{Y}^{-}P_{\mathtt{s}}
\end{array}
\right] ,
\end{equation*}
with $\hat{Y}_{\mathtt{ff}}=P_{\mathtt{f}}\hat{Y}^{-}P_{\mathtt{f}}$
invertible. In fact, we have that 
\begin{equation*}
\hat{Y}_{\mathtt{ff}}=(Y_{\mathtt{ff}}+F_{\mathtt{f}\mathsf{i}}(N_{\mathsf{ii%
}}-I)^{-1}N_{\mathsf{i}}F_{\mathtt{f}}^{\ast })^{-1}.
\end{equation*}
Introduce the additional notations 
\begin{align*}
\hat{A}_{\mathtt{sf}}& =A_{\mathtt{sf}}+P_{\mathtt{s}}G_{\mathsf{i}}(N_{%
\mathsf{ii}}-I)^{-1}N_{\mathsf{i}}F_{\mathtt{f}}^{\ast }, \\
\hat{A}_{\mathtt{f}}& =A_{\mathtt{f}}+F_{\mathtt{f}\mathsf{i}}(N_{\mathsf{ii}%
}-I)^{-1}N_{\mathsf{i}}G^{\ast }, \\
\hat{F}_{\mathtt{f}}& =F_{\mathtt{f}\mathsf{e}}-F_{\mathtt{f}\mathsf{i}}(N_{%
\mathsf{ii}}-I)^{-1}N_{\mathsf{ie}}.
\end{align*}
From the partitioning of $(\mathbf{G}^{(k)}/(N_{\mathsf{ii}}-I))_{E}$ we can
compute $\mathcal{A}\mathcal{F}\mathbf{G}^{(k)}$ by Lemma \ref{lm:AE-Schur}
as 
\begin{equation*}
\mathcal{A}\mathcal{F}\mathbf{G}^{(k)}=P_{\mathtt{s}}\bigl((\mathbf{G}%
^{(k)}/(N_{\mathsf{ii}}-I))_{E}/\hat{Y}_{\mathtt{ff}}\bigr)\mid _{%
\mathfrak{h}_{s}}\equiv \left[ 
\begin{array}{cc}
\tilde{K} & \tilde{L} \\ 
\tilde{M} & \tilde{N}-I
\end{array}
\right] ,
\end{equation*}
where 
\begin{eqnarray*}
\tilde{K} &=&P_{\mathtt{s}}(B+G_{\mathsf{i}}(N_{\mathsf{ii}}-I)^{-1}N_{%
\mathsf{i}}G^{\ast })P_{\mathtt{s}}-P_{\mathtt{s}}\hat{A}_{\mathtt{sf}}\hat{Y%
}_{\mathtt{ff}}^{-1}\hat{A}_{\mathtt{f}}P_{\mathtt{s}} \\
\tilde{L} &=&P_{\mathtt{s}}(G_{\mathsf{e}}-G_{\mathsf{i}}(N_{\mathsf{ii}%
}-I)^{-1}N_{\mathsf{ie}})P_{\mathtt{s}}-P_{\mathtt{s}}\hat{A}_{\mathtt{sf}}%
\hat{Y}_{\mathtt{ff}}^{-1}\hat{F}_{\mathtt{f}}P_{\mathtt{s}} \\
\tilde{M} &=&-P_{\mathtt{s}}\tilde{N}_{\mathsf{ee}}(G_{\mathsf{e}}-G_{%
\mathsf{i}}(N_{\mathsf{ii}}-I)^{-1}N_{\mathsf{ie}})^{\ast }P_{\mathtt{s}}+P_{%
\mathtt{s}}\tilde{N}_{\mathsf{ee}}\hat{F}_{\mathtt{f}}^{\ast }\hat{Y}_{%
\mathtt{ff}}^{-1}\hat{A}_{\mathtt{f}}P_{\mathtt{s}} \\
\tilde{N} &=&P_{\mathtt{s}}\tilde{N}_{\mathsf{ee}}P_{\mathtt{s}}+P_{\mathtt{s%
}}\tilde{N}_{\mathsf{ee}}\hat{F}_{\mathtt{f}}^{\ast }\hat{Y}_{\mathtt{ff}%
}^{-1}\hat{F}_{\mathtt{f}}P_{\mathtt{s}}
\end{eqnarray*}

We also compute $(\mathbf{G}_{E}/(N_{\mathsf{ii}}-I))/(Y_{\mathtt{ff}}+F_{%
\mathtt{f}\mathsf{i}}(N_{\mathsf{ii}}-I)^{-1}N_{\mathsf{i}}F_{\mathtt{f}%
}^{\ast })$. To begin with $\mathbf{G}_{E}/(N_{\mathsf{ii}}-I)$ is given by 
\begin{eqnarray*}
&&\left[ 
\begin{array}{cc}
B+G_{\mathsf{i}}(N_{\mathsf{ii}}-I)^{-1}N_{\mathsf{i}}G^{\ast } & A_{\mathtt{%
sf}}+G_{\mathsf{i}}(N_{\mathsf{ii}}-I)^{-1}N_{\mathsf{i}}F_{\mathtt{f}%
}^{\ast } \\ 
A_{\mathtt{f}}+F_{\mathtt{f}\mathsf{i}}(N_{\mathsf{ii}}-I)^{-1}N_{\mathsf{i}%
}G^{\ast } & Y_{\mathtt{ff}}+F_{\mathtt{f}\mathsf{i}}(N_{\mathsf{ii}%
}-I)^{-1}N_{\mathsf{i}}F_{\mathtt{f}}^{\ast } \\ 
-\tilde{N}_{\mathsf{ee}}(G_{\mathsf{e}}-G_{\mathsf{i}}(N_{\mathsf{ii}%
}-I)^{-1}N_{\mathsf{ie}})^{\ast } & -\tilde{N}_{\mathsf{ee}}(F_{\mathtt{f}%
\mathsf{e}}-F_{\mathtt{f}\mathsf{i}}(N_{\mathsf{ii}}-I)^{-1}N_{\mathsf{ie}%
})^{\ast }
\end{array}
\right.  \\
&&\left. 
\begin{array}{c}
G_{\mathsf{e}}-G_{\mathsf{i}}(N_{\mathsf{ii}}-I)^{-1}N_{\mathsf{ie}} \\ 
F_{\mathtt{f}\mathsf{e}}-F_{\mathtt{f}\mathsf{i}}(N_{\mathsf{ii}}-I)^{-1}N_{%
\mathsf{ie}} \\ 
\tilde{N}_{\mathsf{ee}}-I
\end{array}
\right] ,
\end{eqnarray*}
Continuing the calculation we then find that 
\begin{gather*}
(\mathbf{G}_{E}/(N_{\rm ii}-I))/(Y_{\tt ff}+F_{\mathtt{f}\mathsf{i}%
}(N_{\mathsf{ii}}-I)^{-1}N_{\rm i}F_{\tt f}^{\ast })= \\
\left[ 
\begin{array}{c}
B+G_{\mathsf{i}}(N_{\mathsf{ii}}-I)^{-1}N_{\mathsf{i}}G^{\ast }-\hat{A}_{%
\mathtt{sf}}\hat{Y}_{\mathtt{ff}}^{-1}\hat{A}_{\mathtt{f}} \\ 
-\hat{N}^{-1}(G_{\mathsf{e}}-G_{\mathsf{i}}(N_{\mathsf{ii}}-I)^{-1}N_{%
\mathsf{ie}})^{\ast }+\hat{N}\hat{F}_{\mathtt{f}}^{\ast }\hat{Y}_{\mathtt{ff}%
}^{-1}\hat{A}_{\mathtt{f}}
\end{array}
\right.  \\
\left. 
\begin{array}{c}
G_{\mathsf{e}}-G_{\mathsf{i}}(N_{\mathsf{ii}}-I)^{-1}N_{\mathsf{ie}}-\hat{A}%
_{\mathtt{sf}}\hat{Y}_{\mathtt{ff}}^{-1}\hat{F}_{\mathtt{f}} \\ 
\hat{N}+\hat{N}\hat{F}_{\mathtt{f}}^{\ast }\hat{Y}_{\mathtt{ff}}\hat{F}-I
\end{array}
\right] .
\end{gather*}
By direct comparison of the entries of $\mathcal{A}\mathcal{F}\mathbf{G}%
^{(k)}$ as given above with the corresponding entries of $P_{\mathtt{s}}%
\bigl((\mathbf{G}_{E}/(N_{\mathsf{ii}}-I))/(Y_{\mathtt{ff}}+F_{\mathtt{f}%
\mathsf{i}}(N_{\mathsf{ii}}-I)^{-1}N_{\mathsf{i}}F_{\mathtt{f}}^{\ast })%
\bigr)P_{\mathtt{s}}\mid _{\mathfrak{h}_{\mathtt{s}}}$, we conclude that 
\begin{equation*}
\mathcal{A}\mathcal{F}\mathbf{G}^{(k)}=P_{\mathtt{s}}\bigl((\mathbf{G}%
_{E}/(N_{\mathsf{ii}}-I))/(Y_{\mathtt{ff}}+F_{\mathtt{f}\mathsf{i}}(N_{%
\mathsf{ii}}-I)^{-1}N_{\mathsf{i}}F_{\mathtt{f}}^{\ast })\bigr)P_{\mathtt{s}%
}\mid _{\mathfrak{h}_{\mathtt{s}}}.
\end{equation*}
\end{proof}

\subsection{Commutavity of the adiabatic elimination and instantaneous
feedback operations}

We are now in a position to investigate the
commutativity of the adiabatic elimination and instantaneous feedback limit
operations for a dynamical quantum network with Markovian components. First, 
note that if $
(\mathbf{G}_{E}/Y_{\mathtt{ff}})/(N_{\mathsf{ii}}+N_{\mathsf{i}}F_{\mathtt{f}%
}^{\ast }Y_{\mathtt{ff}}^{-1}F_{\mathtt{f}\mathsf{i}}-I)= 
(\mathbf{G}_{E}/(N_{\mathsf{ii}}-I))/(Y_{\mathtt{ff}}+F_{\mathtt{f}\mathsf{i}%
}(N_{\mathsf{ii}}-I)^{-1}N_{\mathsf{i}}F_{\mathtt{f}}^{\ast })$ then $\mathcal{A}\mathcal{F}\mathbf{G}^{(k)}=\mathcal{F}\mathcal{A}\mathbf{G}%
^{(k)}$. Next, let us introduce the following notation. Let $\mathcal{I}=\{1,2,\ldots ,n\}$ and let $X$ be a $n \times n$ matrix with operator entries. For any
set of \emph{distinct} indices $\mathcal{I}_{1}=\{j_{1},j_{2},\ldots
,j_{m}\},\mathcal{I}_{2}=\{l_{1},l_{2},\ldots ,l_{m}\}\subset \mathcal{I}$ (with $m<n$) define the matrix $X_{\mathcal{I}_{1},\mathcal{I}%
_{2}}$ as $\left[ X_{jl}\right] $ with $j\in \mathcal{I}_{1}$ and $ l \in 
\mathcal{I}_{2}$. Denoting set complements as $\mathcal{I}_{1}^{c}=\mathcal{I%
}\backslash \mathcal{I}_{1}$ and $\mathcal{I}_{2}^{c}=\mathcal{I}\backslash 
\mathcal{I}_{2}$, we define the Schur complement of $X$ with
respect to a sub-matrix $X_{\mathcal{I}_{1},\mathcal{I}_{2}}$ (if it exists), denoted by $%
X/X_{\mathcal{I}_{1},\mathcal{I}_{2}}$, as 
\begin{equation*}
X/X_{\mathcal{I}_{1},\mathcal{I}_{2}}=X_{\mathcal{I}_{1}^{c},\mathcal{I}%
_{2}^{c}}-X_{\mathcal{I}_{1}^{c},\mathcal{I}_{2}}X_{\mathcal{I}_{1},\mathcal{%
I}_{2}}^{-1}X_{\mathcal{I}_{1},\mathcal{I}_{2}^{c}}.
\end{equation*}
We are now
ready to establish commutativity of successive Schur complementations, via the following lemma.

\begin{lemma}
\label{lm:successive} Let $X$ be a matrix of operators whose entries have  $\mathcal{D}$  as a
common invariant domain, and let $\mathcal{I}_{1}$, $\mathcal{I}%
_{2}$, $\mathcal{I}_{3}$ be a disjoint partitioning of the index set $%
\mathcal{I}$ of $X$ (i.e., $\cap _{j=1}^{3}\mathcal{I}_{j}=\phi $ and $\cup
_{j=1}^{3}\mathcal{I}_{j}=\mathcal{I}$). If the Schur complements 
\begin{equation*}
 X/X_{\mathcal{I}_{1}\cup \mathcal{I}_{2},\mathcal{I}_{1}\cup \mathcal{I}%
_{2}},\; 
(X/X_{\mathcal{I}_{2},\mathcal{I}_{2}})/(X/X_{\mathcal{I}_{2},\mathcal{I}%
_{2}})_{\mathcal{I}_{1},\mathcal{I}_{1}},\; \\
 (X/X_{\mathcal{I}_{1},\mathcal{I}_{1}})/(X/X_{\mathcal{I}_{1},\mathcal{I}%
_{1}})_{\mathcal{I}_{2},\mathcal{I}_{2}},
\end{equation*}
exist, then the successive Schur complementation rule holds: 
\begin{equation*}
X/X_{\mathcal{I}_{1}\cup \mathcal{I}_{2},\mathcal{I}_{1}\cup \mathcal{I}%
_{2}}=(X/X_{\mathcal{I}_{2},\mathcal{I}_{2}})/(X/X_{\mathcal{I}_{2},\mathcal{%
I}_{2}})_{\mathcal{I}_{1},\mathcal{I}_{1}}=(X/X_{\mathcal{I}_{1},\mathcal{I}%
_{1}})/(X/X_{\mathcal{I}_{1},\mathcal{I}_{1}})_{\mathcal{I}_{2},\mathcal{I}%
_{2}}.
\end{equation*}
\end{lemma}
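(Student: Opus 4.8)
The plan is to recognise the asserted three-way equality as the operator-valued form of the Crabtree--Haynsworth quotient formula, and to reduce it to a single two-block identity applied twice. Order the index set so that $\mathcal{I}_{1}$ precedes $\mathcal{I}_{2}$ precedes $\mathcal{I}_{3}$, and abbreviate $X_{ab}:=X_{\mathcal{I}_{a},\mathcal{I}_{b}}$ for $a,b\in\{1,2,3\}$, so that $X$ becomes a $3\times 3$ array of operator blocks and $D:=X_{\mathcal{I}_{1}\cup\mathcal{I}_{2},\mathcal{I}_{1}\cup\mathcal{I}_{2}}$ is its leading $2\times 2$ sub-array. In this notation the three quantities in the statement read $X/D$, $(X/X_{22})/(X/X_{22})_{11}$ and $(X/X_{11})/(X/X_{11})_{22}$, so it is enough to establish the single identity $X/D=(X/X_{11})/(X/X_{11})_{22}$; the remaining equality $X/D=(X/X_{22})/(X/X_{22})_{11}$ then follows verbatim after interchanging the labels $\mathcal{I}_{1}$ and $\mathcal{I}_{2}$.

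First I would record what the existence hypotheses provide at the operator level. Existence of $X/X_{11}$ forces $X_{11}$ to be invertible on $\mathcal{D}$, and existence of $(X/X_{11})/(X/X_{11})_{22}$ forces the intermediate block $S:=(X/X_{11})_{22}=X_{22}-X_{21}X_{11}^{-1}X_{12}$ to be invertible on $\mathcal{D}$ as well. These are exactly the two invertibilities needed to invert the leading block: reading the factors off the block $LDU$ decomposition of $D$, one obtains the purely algebraic identity
\[
D^{-1}=\begin{bmatrix} I & -X_{11}^{-1}X_{12}\\ 0 & I\end{bmatrix}\begin{bmatrix} X_{11}^{-1} & 0\\ 0 & S^{-1}\end{bmatrix}\begin{bmatrix} I & 0\\ -X_{21}X_{11}^{-1} & I\end{bmatrix},
\]
valid as an equality of operators on $\mathcal{D}$.

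Next I would substitute this expression into $X/D=X_{33}-\begin{bmatrix} X_{31} & X_{32}\end{bmatrix}D^{-1}\begin{bmatrix} X_{13}\\ X_{23}\end{bmatrix}$ and collect terms, keeping every factor in its original order since the entries do not commute. Carrying out the three block multiplications, the contribution of the $X_{11}^{-1}$ corner is $X_{31}X_{11}^{-1}X_{13}$, while the contribution carrying the central $S^{-1}$ assembles into $(X_{32}-X_{31}X_{11}^{-1}X_{12})\,S^{-1}\,(X_{23}-X_{21}X_{11}^{-1}X_{13})$. Recognising $X_{33}-X_{31}X_{11}^{-1}X_{13}=(X/X_{11})_{33}$, $X_{32}-X_{31}X_{11}^{-1}X_{12}=(X/X_{11})_{32}$ and $X_{23}-X_{21}X_{11}^{-1}X_{13}=(X/X_{11})_{23}$, the result rearranges to $(X/X_{11})_{33}-(X/X_{11})_{32}\,S^{-1}\,(X/X_{11})_{23}$, which is exactly $(X/X_{11})/(X/X_{11})_{22}$, as required.

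The algebra above is routine; the genuine obstacle is domain bookkeeping rather than cancellation. One must verify that each inverse appearing in the $LDU$ factorisation maps $\mathcal{D}$ into $\mathcal{D}$ and that every composite is defined on $\mathcal{D}$, so that the factorisation of $D$, its inversion, and the subsequent regrouping are genuine operator identities on $\mathcal{D}$ and not merely formal. Here I would lean on the standing assumption that $X,X^{\ast},X^{-},X^{-\ast}$ all leave $\mathcal{D}$ invariant, and on reading ``existence of a Schur complement'' as carrying the invertibility of the relevant pivot on $\mathcal{D}$. Because the order of all operator factors is preserved throughout and no commutativity is ever invoked, the computation remains valid verbatim for unbounded entries, and the symmetry argument of the first paragraph then delivers the full three-way equality.
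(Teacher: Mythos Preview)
Your argument is correct: it is the standard Crabtree--Haynsworth quotient-formula computation, carried out via the block $LDU$ factorisation of the leading $2\times 2$ sub-array, with due attention to order of factors and to the invariance of $\mathcal{D}$. The paper does not give a self-contained proof here at all; it simply cites \cite[Lemma~9]{GNW10} and remarks that the present setting is easier because only ordinary (not generalized) inverses occur, so the image/kernel inclusion conditions are unnecessary. Your write-up is precisely the ``mutatis mutandis'' that the paper leaves to the reader, and in that sense it follows the same route the paper points to rather than a genuinely different one.
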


\begin{proof}
The proof of this lemma follows \emph{mutatis mutandis} from the proof of 
\cite[Lemma 9]{GNW10} and here is somewhat simpler because the lemma
concerns ordinary Schur complements rather than generalized Schur
complements as in \cite[Lemma 9]{GNW10}. Therefore the image and kernel
inclusion conditions for  the uniqueness of the
generalized Schur complement (where the inverse is replaced by a generalized inverse) 
are not required.
\end{proof}

\begin{theorem}
\label{thm:commutativity}Under the conditions of Lemmata
\ref{lm:IF-AE} and \ref{lm:AE-IF} we have $
\mathcal{A}\mathcal{F}\mathbf{G}^{(k)}=\mathcal{F}\mathcal{A}\mathbf{G}%
^{(k)}$. Furthermore, if in addition
\begin{enumerate}
\item $\mathcal{D}$ is a core for  the operator $\mathcal{L}^{(\alpha \beta)}$ given in (\ref{cond:core}).

\item  $\mathcal{F}\mathbf{G}^{(k)}$ corresponds to a QSDE that has a unique
solution that extends to a contraction co-cycle on $\mathfrak{h}\otimes \Gamma (L_{\mathfrak{K}%
}^{2}[0,\infty ))$,

\item $\mathcal{D}$ is a core for  the operator $\mathcal{L}^{(\alpha \beta)}$ given in (\ref{cond:core}) with $\hat K, \hat L , \hat M, \hat N$ being replaced therein by the corresponding coefficients of $\mathcal{F}\mathcal{A}\mathbf{G}^{(k)}$,
\end{enumerate}
then the instantaneous feedback and adiabatic elimination operations can be
commuted. That is, applying adiabatic elimination followed by instantaneous
feedback or, conversely, applying instantaneous feedback followed by
adiabatic elimination yields the same QSDE and this QSDE has a unique
solution that extends to a unitary co-cycle on $\mathfrak{h}_{\mathtt{s}}\otimes 
\Gamma (L_{\mathfrak{K}%
}^{2}[0,\infty ))$.
\end{theorem}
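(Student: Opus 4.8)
The plan is to split the statement into a purely algebraic identity of It\={o} generator matrices, $\mathcal{A}\mathcal{F}\mathbf{G}^{(k)}=\mathcal{F}\mathcal{A}\mathbf{G}^{(k)}$, and an analytic step that lifts this identity to the level of the QSDE solutions and supplies the unitary co-cycle. For the algebra I would use the remark opening Section~\ref{sec:IF-AE-sim}: it is enough to show that the two iterated Schur complements sitting inside Lemmata~\ref{lm:IF-AE} and~\ref{lm:AE-IF} coincide, since applying the common compression $P_{\mathtt{s}}(\cdot)P_{\mathtt{s}}\mid_{\mathfrak{h}_{\mathtt{s}}}$ to a single operator then forces $\mathcal{F}\mathcal{A}\mathbf{G}^{(k)}$ and $\mathcal{A}\mathcal{F}\mathbf{G}^{(k)}$ to agree. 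That equality of inner expressions is exactly an instance of the successive Schur complementation rule of Lemma~\ref{lm:successive}.

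Concretely, I would apply Lemma~\ref{lm:successive} with $X=\mathbf{G}_E$, using the fine block decomposition of $\mathbf{G}_E$ whose four index blocks carry the diagonal entries $B$, $Y_{\mathtt{ff}}$, $N_{\mathsf{ii}}-I$ and $N_{\mathsf{ee}}-I$. Taking $\mathcal{I}_1$ to be the fast block (entry $Y_{\mathtt{ff}}$) and $\mathcal{I}_2$ the internal-field block (entry $N_{\mathsf{ii}}-I$), one has $X/X_{\mathcal{I}_1,\mathcal{I}_1}=\mathbf{G}_E/Y_{\mathtt{ff}}$ with $\mathcal{I}_2$-block $N_{\mathsf{ii}}+N_{\mathsf{i}}F_{\mathtt{f}}^{\ast}Y_{\mathtt{ff}}^{-1}F_{\mathtt{f}\mathsf{i}}-I$, and $X/X_{\mathcal{I}_2,\mathcal{I}_2}=\mathbf{G}_E/(N_{\mathsf{ii}}-I)$ with $\mathcal{I}_1$-block $Y_{\mathtt{ff}}+F_{\mathtt{f}\mathsf{i}}(N_{\mathsf{ii}}-I)^{-1}N_{\mathsf{i}}F_{\mathtt{f}}^{\ast}$. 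Invertibility of $Y_{\mathtt{ff}}$ (Condition~5 of Section~\ref{sec:BvHS}), of $N_{\mathsf{ii}}-I$, and of these two reduced blocks is precisely what the hypotheses of Lemmata~\ref{lm:IF-AE} and~\ref{lm:AE-IF} guarantee, so all three Schur complements in Lemma~\ref{lm:successive} exist; the lemma then equates both iterated complements with the single operator $\mathbf{G}_E/(\mathbf{G}_E)_{\mathcal{I}_1\cup\mathcal{I}_2,\,\mathcal{I}_1\cup\mathcal{I}_2}$. Compressing by $P_{\mathtt{s}}(\cdot)P_{\mathtt{s}}\mid_{\mathfrak{h}_{\mathtt{s}}}$ and invoking Lemmata~\ref{lm:IF-AE} and~\ref{lm:AE-IF} yields $\mathcal{A}\mathcal{F}\mathbf{G}^{(k)}=\mathcal{F}\mathcal{A}\mathbf{G}^{(k)}$.

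For the dynamical content I would invoke Theorem~\ref{thm:s-conv} in both orders. In the elimination-first order, $\mathbf{G}^{(k)}$ integrates to a unitary (hence contraction) co-cycle by the standing assumption of Section~\ref{sec:intro-QSDE}, so with Condition~1 supplying the core hypothesis, Theorem~\ref{thm:s-conv} identifies $\hat{\mathbf{G}}=\mathcal{A}\mathbf{G}^{(k)}$ as the genuine adiabatic limit and as an It\={o} generator integrating to a unitary co-cycle. Applying $\mathcal{F}$ to $\hat{\mathbf{G}}$ is well posed since the invertibility hypothesis of Lemma~\ref{lm:IF-AE} makes $\hat{N}_{\mathsf{ii}}-I$ invertible, and by \cite{GJ08} this instantaneous feedback Schur complement sends unitary co-cycles to unitary co-cycles; hence $\mathcal{F}\mathcal{A}\mathbf{G}^{(k)}$, and by the algebraic part also $\mathcal{A}\mathcal{F}\mathbf{G}^{(k)}$, integrates to a unitary co-cycle on $\mathfrak{h}_{\mathtt{s}}\otimes\Gamma(L_{\mathfrak{K}}^{2}[0,\infty))$. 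For the feedback-first order I would check that $\mathcal{F}\mathbf{G}^{(k)}$ still obeys the assumptions of Section~\ref{sec:BvHS}: because $N$ is $k$-independent and $P_{\mathtt{s}}F=0$, the feedback complement keeps the grading $k^{2}\hat{Y}+k\hat{A}+\hat{B}$ with $\hat{Y}=Y+F_{\mathsf{i}}(N_{\mathsf{ii}}-I)^{-1}N_{\mathsf{i}}F^{\ast}$, and the hypotheses on $\hat{Y}$ in Lemma~\ref{lm:AE-IF} are exactly what enforce the slow/fast diagonal structure and the invertibility of its fast block demanded by Condition~5. Conditions~2 and~3 then supply the contraction co-cycle and core hypotheses of Theorem~\ref{thm:s-conv} for $\{\mathcal{F}\mathbf{G}^{(k)}\}_k$, confirming that its adiabatic limit is the same unitary co-cycle, with generator $\mathcal{A}\mathcal{F}\mathbf{G}^{(k)}=\mathcal{F}\mathcal{A}\mathbf{G}^{(k)}$.

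The algebra is essentially automatic once Lemma~\ref{lm:successive} is available, so I expect the main obstacle to be the analytic bookkeeping of the last step. The delicate points are verifying that \emph{all} the structural assumptions of Section~\ref{sec:BvHS} --- not merely the gradings, but the Hamiltonian block conditions and the projector identities for $\hat{Y}$ and $\hat{Y}^{-}$ --- are genuinely inherited by $\mathcal{F}\mathbf{G}^{(k)}$, and reconciling the contraction-versus-unitary co-cycle hypotheses across the two orders, since the original family is unitary by the standing assumption while the feedback-reduced family is only posited contractive in Condition~2. This is what must be controlled so that Theorem~\ref{thm:s-conv} is legitimately applicable in each order and returns one and the same unitary limit regardless of the order of reduction.
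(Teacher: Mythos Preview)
Your proposal is correct and follows essentially the same approach as the paper: both reduce the algebraic identity to Lemma~\ref{lm:successive} applied to $\mathbf{G}_E$ with the fast and internal-field blocks as pivots, then invoke Theorem~\ref{thm:s-conv} in each order for the analytic content. The only detail the paper adds that you left implicit is that existence of the single Schur complement $\mathbf{G}_E/(\mathbf{G}_E)_{\mathcal{I}_1\cup\mathcal{I}_2,\mathcal{I}_1\cup\mathcal{I}_2}$ requires invertibility of the combined $2\times 2$ pivot block, which the paper obtains from invertibility of $Y_{\mathtt{ff}}$ and of its Schur complement via the Banachiewicz inversion formula; your analytic bookkeeping is in fact more detailed than the paper's very terse treatment.
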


\begin{proof}
If $\left[ 
\begin{array}{cc}
Y_{\mathtt{ff}} & F_{\mathtt{f}\mathsf{i}} \\ 
-N_{\mathsf{i}}F_{\mathtt{f}}^{\ast } & N_{\mathsf{ii}}-I
\end{array}
\right] $ is invertible, the Schur complement
\begin{equation*}
\mathbf{G}_{E}/\left[ 
\begin{array}{cc}
Y_{\mathtt{ff}} & F_{\mathtt{f}\mathsf{i}} \\ 
-N_{\mathsf{i}}F_{\mathtt{f}}^{\ast } & N_{\mathsf{ii}}-I
\end{array}
\right] 
\end{equation*}
is well-defined. However, since $Y_{\mathtt{ff}}$ is invertible and $N_{%
\mathsf{ii}}+N_{\mathsf{i}}F_{\mathtt{f}}^{\ast }Y_{\mathtt{ff}}^{-1}F_{%
\mathtt{f}\mathsf{i}}-I$ is also invertible by the conditions of Lemmata \ref{lm:IF-AE} and \ref{lm:AE-IF}, the matrix $\left[ 
\begin{array}{cc}
Y_{\mathtt{ff}} & F_{\mathtt{f}\mathsf{i}} \\ 
-N_{\mathsf{i}}F_{\mathtt{f}}^{\ast } & N_{\mathsf{ii}}-I
\end{array}
\right] $ is indeed invertible by the Banachiewicz matrix inversion formula
(e.g., see \cite[Section III-A]{GNW10}). The first result follows from this and Lemma \ref{lm:successive}.

Since now $\mathcal{A}\mathcal{F}\mathbf{G}^{(k)}=\mathcal{F}\mathcal{A}\mathbf{G}^{(k)}$, if the QSDEs corresponding to $\mathcal{A}\mathcal{F}\mathbf{G}^{(k)}$ and $\mathcal{F}\mathcal{A}\mathbf{G}^{(k)}$ have unique solutions that extend to
a unitary co-cycle on $\mathfrak{h}_{\mathtt{s}}\otimes \Gamma (L_{\mathfrak{K}%
}^{2}[0,\infty ))$ then they will coincide. Moreover, from this it follows by inspection that the remaining three conditions of the theorem guarantee that all the requirements of Theorem 
\ref{thm:s-conv} are met so that:
\begin{enumerate}
\item  $U^{\left( k\right) }(t)$ converges to $\hat{U}(t)$ in the sense of Theorem 
\ref{thm:s-conv}.

\item  The solution of the QSDE corresponding to $\mathcal{F}\mathbf{G}%
^{(k)} $ converges to the solution of the QSDE corresponding to $\mathcal{A} 
\mathcal{F}\mathbf{G}^{(k)}$ in the sense of Theorem \ref{thm:s-conv}.
\end{enumerate}
\end{proof}

Thus, we conclude that under the sufficient conditions for each of the sequence of 
operations $\mathcal{A}\mathcal{F}$ and $\mathcal{F}\mathcal{A}$, the two sequences of
operations are equivalent and yield the same reduced-complexity QSDE model.
This generalizes the results of \cite{GNW10} for 
quantum feedback networks with fast oscillatory components
to be eliminated. Remarkably, the
structural constraints imposed in \cite{BvHS07} to establish rigorous
adiabatic elimination results for open Markov quantum systems, originally
introduced for  considerations unrelated to the goals of this paper, play  a
crucial role in the algebra required for us to establish our results.
Exploiting these constraints, we proved that  both the instantaneous feedback limit and adiabatic elimination operations correspond to Schur complementation of a
common extended It\={o} generator matrix but with respect to different
sub-blocks of this matrix. From this we then showed that the instantaneous
feedback and adiabatic elimination operations are consistent and can be commuted once  each
sequence of operations is well-defined.

\bibliographystyle{plainnat}

\end{document}